\documentclass{article}

\usepackage[preprint]{neurips_2026}
\usepackage[normalem]{ulem}
\usepackage{enumitem}
\usepackage{float}
\usepackage{graphicx}
\usepackage{dcolumn}
\usepackage{amssymb}
\usepackage{appendix}
\usepackage{physics}
\usepackage{mathtools}
\usepackage{esvect}
\usepackage{wrapfig}
\usepackage{amsthm}
\usepackage{verbatim}
\usepackage{bbm}
\usepackage[colorlinks=true,linkcolor=blue,citecolor=red,plainpages=false,pdfpagelabels]{hyperref}
\usepackage{subfigure}

\usepackage{placeins}
\usepackage[mathscr]{euscript}

\usepackage{amsmath,amssymb,amsfonts}
\usepackage{algorithmic}
\usepackage{graphicx}
\usepackage{textcomp}
\usepackage[svgnames]{xcolor}
\usepackage{booktabs}
\usepackage{caption}
\usepackage{hyperref}
\usepackage{colortbl}
\usepackage{amsmath}
 \usepackage{bbm}
 \usepackage{braket}
\usepackage[normalem]{ulem}

\usepackage{amsthm, amssymb}
\usepackage{amsmath}

\let\Tr\relax

\definecolor{lightgreen}{HTML}{90EE90}
\def\Tr{\operatorname{Tr}}

\def \Hcal {\mathcal{H}}
\def \Kcal{\mathcal{K}}
\def \Ebb{\mathbb{E}}
\def \pbf{\mathbf{p}}

\def\({\left(}
\def\){\right)}
\def\[{\left[}
\def\]{\right]}
\def \thetavec {\vec{\theta}}
\def \Lcal {\mathcal{L}}
\def \Rcal{\mathcal{R}}

\let\emptyset\varnothing

\newtheorem{theorem}{Theorem}

\newtheorem{corollary}{Corollary}

\newtheorem{definition}{Definition}

\newtheorem{lemma}{Lemma}

\newtheorem{proposition}{Proposition}
\newtheorem{remark}{Remark}

\newtheorem{note}{Note}

\def\>{\rangle}
\def\<{\langle}

\usepackage[utf8]{inputenc} 
\usepackage[T1]{fontenc}    
\usepackage{hyperref}       
\usepackage{url}            
\usepackage{booktabs}       
\usepackage{amsfonts}       
\usepackage{nicefrac}       
\usepackage{microtype}      
\usepackage{xcolor}         

\title{Adversarial Effects on Expressibility and Trainability in Distributed Variational Quantum Algorithms}

\author{
  Abhishek Sadhu \\
  School of Computer Science, University of Birmingham \\
  Edgbaston, Birmingham B15 2TT, United Kingdom\\
  \texttt{a.sadhu@bham.ac.uk}\\
  \And
  Sharu Theresa Jose \\
  School of Computer Science, University of Birmingham \\
  Edgbaston, Birmingham B15 2TT, United Kingdom\\
  \texttt{s.t.jose@bham.ac.uk}
  }

\begin{document}

\maketitle

\begin{abstract}
    Distributed quantum algorithms offer a promising pathway to scale variational quantum algorithms beyond the constraints of noisy intermediate-scale quantum hardware. However, existing approaches implicitly assume a trusted entanglement-sharing layer across quantum processors. We show that this assumption introduces a fundamental vulnerability: adversarial perturbations of shared entanglement induce structured gate-level noise that directly impacts quantum learning.
    We develop a framework that maps entanglement-level perturbations to gate-level noise via an explicit Kraus representation. To quantify their impact, we introduce Kraus expressibility, a metric that generalizes unitary expressibility to noisy quantum channels. We then establish a trade-off between Kraus expressibility and trainability of noisy quantum circuits through gradient variance analysis.
  Our analysis reveals that an adversary can manipulate Kraus expressibility to maintain sufficiently large cost gradients (avoiding barren plateaus) while systematically biasing optimization toward incorrect solutions. We validate these findings through numerical simulations, demonstrating adversarial degradation of expressibility and trainability.  
\end{abstract}

\vspace{-0.1cm}
\section{Introduction}
\vspace{-0.1cm}
The realization of a desirably functional quantum machine learning (QML) framework~\cite{ventura1999quantum,biamonte2017quantum}  beyond the limitations of near-term hardware is a central challenge for the field. Despite steady experimental progress across different architectures~\cite{google2021weber,ibmquantum_computers}, currently available quantum processors are limited by error-corrected qubit count~\cite{vezvaee2026demonstration,hong2024entangling,reichardt2024fault}, short lifetime of physical qubits~\cite{ofek2016extending,wang2022towards}, and imperfect gate fidelities~\cite{jin2025superconducting,marxer2025above}. 

In this era of noisy quantum devices, variational quantum algorithms (VQAs)~\cite{mcclean2016theory,cerezo2021variational} have emerged as a promising framework for achieving practical quantum utility~\cite{bharti2022noisy}. 
VQAs typically employ parameterized quantum circuits (PQCs) trained via classical optimization loops~\cite{kundu2024kanqas,sadhu2024quantum,kundu2026reinforcement}, and have demonstrated early empirical success across tasks such as classification \cite{jager2023universal}, clustering~\cite{otterbach2017unsupervised}, and generative modeling \cite{hibat2024framework, huang2025generative}. 
 However, their scalability remains fundamentally constrained by the limitations of currently available noisy-intermediate scale quantum (NISQ)~\cite{preskill2018quantum} hardware.

\textit{Distributed quantum computing}~\cite{cirac1999distributed,diadamo2021distributed} offers a pathway to overcome these hardware limitations by 
partitioning a quantum circuit across multiple,  small-scale NISQ quantum processing units (QPUs),  thereby enabling the execution of circuits beyond the capacity of  single devices.  In distributed VQAs, 
this requires sharing quantum correlations (such as entanglement~\cite{horodecki2009quantum} or nonlocality~\cite{brunner2014bell,sadhu2023testing}) across QPUs to  implement non-local multi-qubit quantum gates across them.
However, this architecture introduces new vulnerabilities. 
In practice, entanglement is distributed via quantum repeater networks~\cite{briegel1998quantum,azuma2023quantum,sadhu2023practical}, creating new attack surfaces  \cite{tunc2025resilience, harkness2025security}. For example, entangled photon sources~\cite{kaur2025chip,chen2021quantum,liu2019solid} (e.g., SPDC~\cite{couteau2018spontaneous} based network nodes) can be compromised through pump manipulation or noise injection, degrading the quality of shared entanglement. 

Such adversarial perturbations to the shared entanglement can corrupt non-local gate execution, leading to cascading errors across distributed quantum circuits. From a machine learning perspective, this vulnerability is particularly critical. VQAs rely on precise circuit evaluations to estimate gradients and optimize model parameters. The gate noise introduced via the entanglement-level perturbations directly affect circuit expressibility, and optimization dynamics. \textit{Despite this, the effect of entanglement-level attacks on distributed quantum learning remains largely unexplored}.

\emph{\textbf{Contributions}:} We address this gap by introducing a novel adversarial framework for distributed VQAs, where the attacker perturbs the shared entangled qubit pair underlying inter-device communication. We focus on the \textit{cat-entangler/cat-disentangler} protocol \cite{yimsiriwattana2004generalized},  which implements distributed, non-local CNOT gates using a single entangled qubit pair and two bits of classical communication. As a minimal and widely studied primitive for distributed quantum computation, it provides a natural setting for adversarial analysis. 
Our main contributions are as follows:

(1) We show that perturbations to shared entanglement between QPUs systematically corrupt non-local multi-qubit gates,  transforming ideal unitary gates into \textit{noisy quantum channels}. We derive an explicit Kraus representation of the resulting channel, parameterized by the adversary, providing a principled mapping from entanglement-level attacks to gate-level noise.

(2) We introduce \textit{Kraus expressibility}, a novel expressibility measure that quantifies how well `noisy' quantum circuits can uniformly explore the space of  ideal unitaries. This generalizes \textit{unitary expressibility} for noise-less PQCs \cite{holmes2022connecting} to account for both adversarial and intrinsic hardware noise. We show that adversarial perturbations  degrade Kraus expressibility through two mechanisms: $(i)$ state purity degradation and $(ii)$ noise-induced cross-correlations. Our analysis introduces a framework to evaluate vulnerability of PQCs to stealthy noise manipulation.

(3) We establish a theoretical link between Kraus expressibility and \textit{trainability} of noisy PQCs by cost gradient variance analysis. We show that while noise typically induces barren plateaus, an adversary can
balance noise to evade barren plateaus while systematically biasing the optimization landscape.

(4) Through numerical simulations, we validate our theory and show that weakly cross-correlated, adversarial noise accelerates the decay of both the Kraus expressibility norm and gradient variance. In particular, using the hardware-efficient ansatz as a testbed, we demonstrate that increasing noise induces noise-induced barren plateaus even in otherwise trainable shallow circuits, and completely overpowers mitigation strategies such as restricted parameter initialization and local cost functions, ultimately rendering the ansatz untrainable at moderate system sizes.

\emph{\textbf{Related Works}:}
Distributing quantum algorithms~\cite{liu2024nonlocal,beals2013efficient,cirac1999distributed,sadhu2025cutqas} among multiple quantum processors offers a promising pathway to overcome the constraints of NISQ hardware~\cite{ecker2023advances,kim2023evidence,bharti2022noisy}. Distributed QML frameworks have been explored for processing large-scale data~\cite{marshall2023high, hwang2025distributed}, faster training~\cite{du2021accelerating, chen2021federated}, and privacy-preserving learning~\cite{neumann2022distributed}. In the adversarial scenario, recent works have explored integrity oriented threats~\cite{xia2021defending} as well as privacy leakage attacks~\cite{heredge2025characterizing}. Additional QML-specific attack vectors include black-box model extraction via repeated API queries and surrogate training~\cite{fu2024quantumleak,fu2025quantum,fu2025copyqnn}, grey-box  attacks on transpiled circuits to manipulate inputs and estimate parameters or logic ~\cite{wang2023qumos,kundu2025adversarial}, and white-box attacks involving reverse engineering of variational quantum circuits and parameters from transpiled or pulse level representations~\cite{xu2025security,ghosh2025ai,saeed2019locking,rehman2025opaque}, as well as crosstalk induced side channel leakage in the distributed setting. 

 While recent work~\cite{shao2025diagnosing} has explored expressibility and trainability of PQCs under fixed hardware noise, a critical vulnerability remains  unexplored in the context of distributed VQAs: the impact of environmental and adversarial perturbations on the distribution of entangled states~\cite{das2021universal,lu2022micius,sadhu2023practical,wei2022towards} between the QPUs. Since shared entanglement underpins the implementation of non-local quantum gates between different QPUs~\cite{yimsiriwattana2004generalized}, adversarial perturbations at this layer directly affect the computational behavior of VQAs distributed among computationally bounded network nodes~\cite{leone2025entanglement,sadhu2026cryptographic}. This work aims to address this gap.
\begin{figure}
    \centering
    \includegraphics[scale = 0.21]{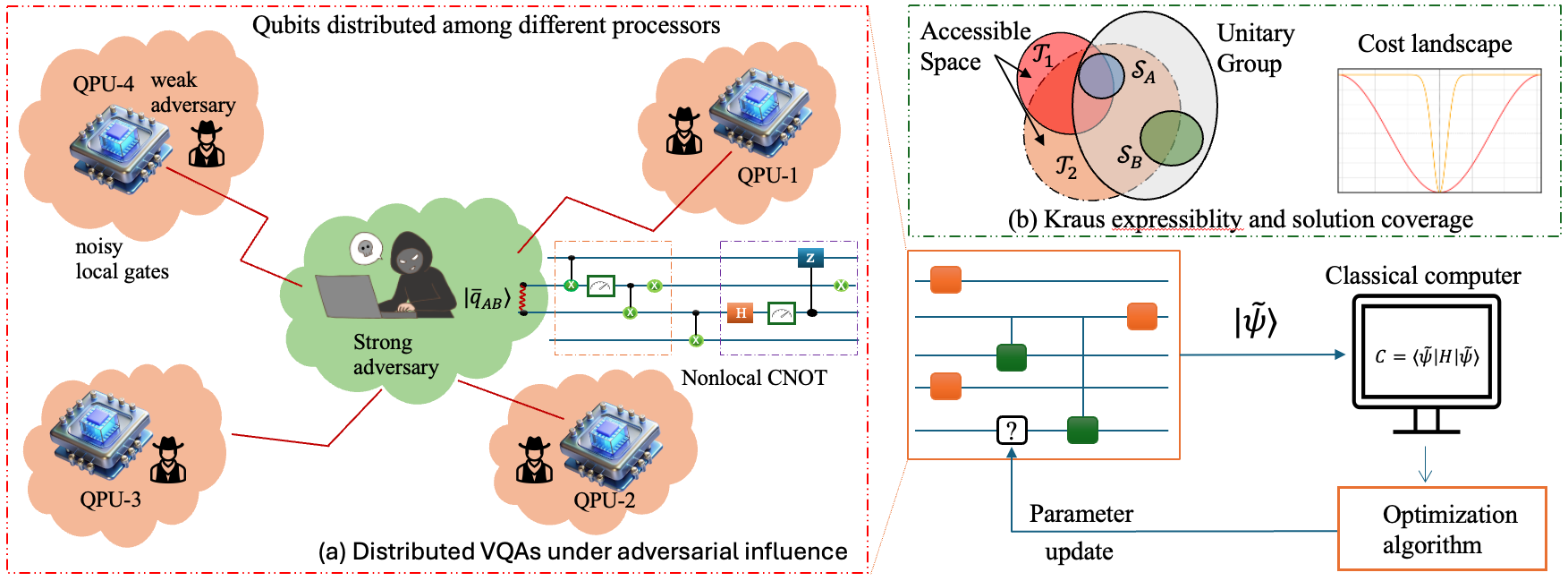}
    \caption{$(a)$ Distributed VQAs under adversarial influence: A strong adversary corrupts shared entanglement among the network nodes, while a weak adversary injects local noise at QPUs, disrupting optimization. $(b)$ Kraus expressiblity and solution coverage: Let $\mathcal{S}_A$ and $\mathcal{S}_B$ denote solution sets within the unitary group for problems A and B. Adversarially-induced noisy non-local gates restrict the accessible transformation space to $\mathcal{T}$. Kraus expressibility measures how uniformly $\mathcal{T}$ explores the unitary group: low expressibility limits coverage (e.g., only $\mathcal{S}_A$), whereas high expressibility enables broader coverage but leads to barren plateaus. }
    \label{fig:vqe_scenario}
    \vspace{-0.3cm}
\end{figure}
\section{Distributed VQAs: Background and Problem Setup}
This section introduces distributed VQAs and formulates the adversarial framework under study.
\subsection{Variational Quantum Algorithms} \label{sec:VQAs}
Variational quantum algorithms (VQAs)~\cite{mcclean2016theory,cerezo2021variational} are hybrid quantum-classical algorithms built on parameterized quantum circuits (PQCs). A PQC defines a parameterized unitary channel $\mathcal{U}(\thetavec) (\cdot)=U(\thetavec)(\cdot)U(\thetavec)^{\dag}$, via the unitary operator $U(\thetavec)$, which acts on an initial $n$-qubit state $\rho_0$ to produce the output state $\rho(\thetavec)=\mathcal{U}(\thetavec)(\rho_0)$. The state is evaluated using a problem-specific Hermitian observable $\mathcal{H}$, yielding the cost $  C_{\rho(\thetavec),\mathcal{H}}={\rm Tr}(\mathcal{H} \rho(\thetavec)), $ where ${\rm Tr}(\cdot)$ denotes the trace operation. The goal of VQA is to find the model parameter $\thetavec \in \Theta$ that minimizes the cost function. This is typically done via classical optimizers using quantum-estimated costs or gradients.

A critical design choice in VQAs is the selection of circuit \textit{ansatz} implementing the channel $\mathcal{U}(\thetavec)(\cdot)$. Typically, this is expressed as a composition of $L$-layers of parameterized unitary channels:
\begin{align}
  \mathcal{U}_{\thetavec}= \mathcal{U}_{L}(\thetavec_L) \circ \mathcal{U}_{L-1}(\thetavec_{L-1}) \hdots \circ \mathcal{U}_{1}(\thetavec_1),  \label{eq:unitarychannels}
\end{align} \vspace{-0.1cm} with $\thetavec=(\thetavec_1,\hdots,\thetavec_{L})$ and $\mathcal{U}_{l} (\thetavec_l)(\cdot)=  U_l(\thetavec_l) W_l (\cdot)W_l^{\dag} U_l(\thetavec_l)^{\dag}$ consisting of fixed unitaries $W_l$ (e.g., CNOT gates), and parameterized rotations $U_l(\thetavec_l)=\exp(-i\thetavec_lV_l)$ generated by Hermitian operators $V_l$. The ansatz defines the hypothesis class explored by the VQA and determines both  expressivity and trainability. 
Common choices include the quantum alternative operator ansatz \cite{farhi2014quantum}, the coupled cluster ansatz \cite{bartlett2007coupled} and the hardware efficient ansatz \cite{kandala2017hardware}.

The VQA formulation above assumes a single $n$-qubit device, limiting scalability. For large-scale problems, the required qubit count often exceeds current hardware capabilities. A natural approach is to distribute computation across multiple quantum processors, enabling the effective simulation of larger systems. We next outline a framework for distributed VQAs.
\subsection{Distributed VQAs via the Cat-Entangler/Cat-Disentangler Protocol} \label{sec:cat-entangler-cat-disentangler}
Distributed VQAs involve partitioning PQCs across multiple QPUs and coordinating their execution. A central challenge in this setting is the implementation of non-local, multi-qubit operations (e.g., CNOT gates) across different QPUs. 
This is typically achieved via distributed protocols that leverage classical communication together with pre-shared entanglement between QPUs \cite{bennett1993teleporting, eisert2000optimal}. In this work, we focus on the \textit{cat-entangler/cat-disentangler} protocol 
\cite{yimsiriwattana2004generalized}, which realizes a non-local CNOT gate with the control and target qubits located on different QPUs, using a single shared entangled Bell pair and two bits of classical communication. We briefly describe this protocol next.
 
Consider a setting where the control qubit $q_c$ resides on QPU A and the target qubit $q_t$ on QPU B. Let $q_A$ and $q_B$ denote the two halves of a shared Bell entangled-state pair,
$\vert q_{AB}\rangle =\frac{1}{\sqrt{2}} (\vert 00 \rangle + \vert 11\rangle)$,  located on QPU A and B, respectively. The protocol proceeds in three steps:\\
\textbf{1. Cat-entangler circuit}:  Applies a local CNOT gate on QPU A with $q_c$ controlling the qubit  $q_A$, 
followed by measurement of $q_A$. If the measurement outcome is $1$, a Pauli-$X$ gate (denoted $\sigma_X$) is applied to $q_B$. This creates a CAT-like entangled state,  $\alpha \vert 00\rangle + \beta \vert 11\rangle$, between $q_c$ and $q_B$, effectively  transferring the control information to $q_B$ on QPU B.
\\
\textbf{2. A local CNOT gate}  is applied on QPU B with $q_B$ as the control and $q_t$ as the target, realizing the desired non-local operation albeit using the entangled qubit $q_B$. 
\\
\textbf{3. Cat-disentangler circuit:} 
Restores the original control qubit by applying a Hadamard gate on qubit $q_B$ followed by measurement. If the  outcome is $1$, a Pauli-$Z$ gate (denoted $\sigma_Z$) is applied to  $q_c$.

Using such non-local gate implementations, distributed VQAs execute PQCs across multiple QPUs in a coordinated manner. Each QPU performs local circuit evaluations and measures its qubits to obtain partial estimates of the cost objective, which are are communicated to a central classical processor.  The processor aggregates them to evaluate the cost function (or its gradients) and update the parameters, which are then broadcast back to the QPUs for the next optimization step. 
\vspace{-0.1cm}
\subsection{Adversarial Attacks on Distributed VQAs} \label{sec:adversarial_attack_levels}
Distributed VQAs based on the cat-entangler/cat-disentangler protocol introduce new attack surfaces at the level of inter-device entanglement. In practical networked settings,  entanglement between spatially separated QPUs is established via repeater-based quantum communication networks, which distribute Bell pairs to enable \textit{non-local} gate operations. Such networks are highly susceptible to both environmental noise as well as adversarial 
attacks \cite{tunc2025resilience, harkness2025security}.

We consider an adversary with partial or total control over the repeater network, capable of tampering with non-local CNOT implementations. Depending on the level at which the perturbation is induced, we distinguish between two adversarial models (see Fig.~\ref{fig:vqe_scenario}$(a)$):
\\
{1. A \textit{weak adversary} operates \textit{locally} at the QPUs and perturbs the computation by \textit{injecting noise} on the control and target qubits before and after the ideal non-local CNOT gate. 
This results in an effective noisy implementation of the gate, without modifying the shared entangled resource.}
\\
{2:}  A \textit{strong adversary} directly manipulates the entanglement distribution process by perturbing the shared Bell pair $\vert q_{AB} \rangle$ (defined in Section~\ref{sec:cat-entangler-cat-disentangler}) as \begin{align}
    \mathcal{A}: \vert q_{AB}\rangle  \rightarrow \vert \tilde{q}_{AB}\rangle= \sum_{i,j \in \{0,1\}}c_{ij}\vert ij\rangle, \label{eq:adversary_attack}
\end{align}  where $\{c_{ij}\}$ are complex coefficients that satisfy $\sum_{i,j} |c_{ij}|^2=1$, excluding the ideal Bell-states.
This perturbation induces deviations in the effective non-local gate, and as shown in Section ~\ref{sec:quantumchannelequivalence}, yields an equivalent noisy quantum channel. 

\begin{wrapfigure}{r}{0.57\textwidth}
    \centering
\includegraphics[scale=0.28, clip=true, trim=0.1in 0.1in 0in 0.1in]{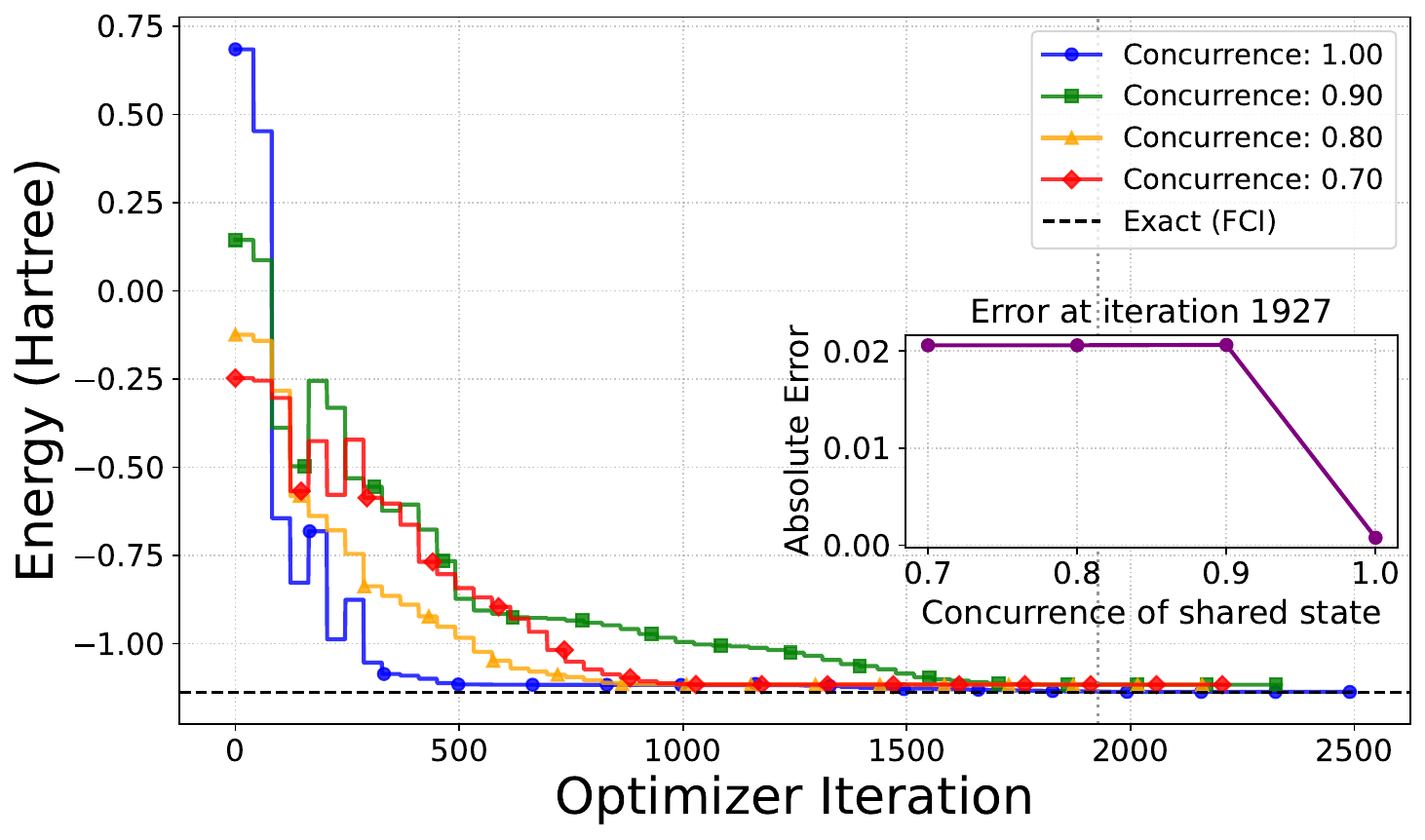}
    \caption{Impact of adversarial perturbations on VQE convergence for the \texttt{H}$_{2}$ ground state.
    }
    \label{fig:vqe}
    \vspace{-0.6cm}
\end{wrapfigure}

The deviations induced by both adversarial models can be unified at the channel level. Let $\mathcal{U} (\cdot)=U (\cdot) U^{\dag}$ denote the ideal non-local CNOT channel, and $\mathcal{K}(\cdot)=\sum_j E_j (\cdot) E_j^{\dag}$ denote the adversarially-induced noise channel, where $\{E_j\}$ are Kraus operators satisfying $\sum_j E_j^{\dag}E_j=\mathbb{I}$.
The adversary aims to degrade the correctness of distributed VQA solutions by inducing a deviation from the ideal unitary channel $\mathcal{U}$, while remaining difficult to detect. 
We quantify detectability via the maximum single-shot discrimination probability,
$
        p_{\text{guess}}(\mathcal{U},\mathcal{K}) \coloneqq \frac{1}{2} (1 + \frac{1}{2} || \mathcal{U} - \mathcal{K} ||_\diamond ), $ where $\Vert \cdot \Vert_{\diamond}$ denotes the diamond norm between quantum channels, and impose the stealth constraint on the adversary: $p_{\text{guess}}(\mathcal{U},\mathcal{K}) \leq \epsilon$ for some $\epsilon \in (0.5,1)$. This constraint captures the requirement that the induced perturbation should not be easily distinguishable from the ideal operation, given the detection capabilities of the network. 
        To illustrate the impact of adversarial perturbations, Fig.~\ref{fig:vqe} considers a distributed variational quantum eigensolver (VQE) for computing the ground state of 4-qubit \texttt{H}$_2$ molecule in equilibrium geometry considering sto-3g basis set and Jordan-Wigner mapping. The objective is to minimize the cost $  C_{\ket{\psi(\thetavec)},\mathcal{H}_{\texttt{H}_2}}=\bra{\psi(\vec{\theta})} \mathcal{H}_{\texttt{H}_2} \ket{\psi(\vec{\theta})}$, where the state $\ket{\psi(\vec{\theta})}$ is prepared by applying a hardware efficient ansatz to the initial state $\ket{\psi_{0}}$.
We model a strong adversary that perturbs non-local CNOT gates as in Proposition~\ref{prop:noisy_cnot_cat}. 
Fig.~\ref{fig:vqe} shows  the resulting energy landscape across the  optimization iterations.
In the adversary-free setting (concurrence\footnote{We quantify the amount of entanglement in a bipartite quantum state $\rho_{AB}$ using concurrence~\cite{wootters1998entanglement} which is defined as $\max\{0,\lambda_1^e - \lambda_2^e - \lambda_3^e - \lambda_4^e\}$ where $\lambda_s^e$ denotes the square root of the eigenvalues of $\rho_{AB} \tilde{\rho}_{AB}$ with $\tilde{\rho}_{AB} = (\sigma_Y \otimes \sigma_Y) \rho_{AB} (\sigma_Y \otimes \sigma_Y)$ being the spin-flipped state of $\rho_{AB}$ and $\sigma_Y$ being the Pauli-Y matrix.} $=1$), the optimizer converges to the correct ground-state energy. In contrast, as the adversary reduces concurrence (i.e., increases noise), optimization converges to a biased solution and fails to reach chemical accuracy.

While such deviations could, in principle, be detected via verification or spot-checking in fault-tolerant settings, this is impractical in resource-constrained NISQ scenarios, where each VQE iteration is costly. Consequently, adversarial perturbations can silently degrade optimization performance.
\vspace{-0.1cm}
\section{Adversarial Attacks and Quantum Channel Equivalence}\label{sec:quantumchannelequivalence} 

We now characterize the effect of \textit{strong adversarial} perturbations on non-local CNOT implementations. In particular, we show that perturbations to the shared entangled state in the cat-entangler/cat-disentangler protocol induce an equivalent noisy quantum channel acting on the local qubits.
\begin{proposition} \label{prop:noisy_cnot_cat}
    Consider the cat-entangler/cat-disentangler protocol for implementing a non-local CNOT gate between control qubit $q_c$ (QPU A) and target qubit $q_t$ (QPU B). Suppose a strong adversary perturbs the Bell-entangled state pair as in Eq.~\eqref{eq:adversary_attack}. Then, the resulting implementation of the non-local CNOT operation is equivalent to a noisy quantum channel $\mathcal{K}^{\mathbf{p}}(\cdot)= \sum_{i,j \in \{0,1\}} E_{ij}(\cdot)E_{ij}^{\dag}$ with parameters $\mathbf{p}=(c_{00},c_{01},c_{10},c_{11})$ and the following Kraus operators:
    \begin{eqnarray}
    && E_{00} = \frac{1}{\sqrt{2}} \op{0}_{q_c} \otimes (c_{00} \mathbbm{1} + c_{01} \sigma_X)_{q_t} + \frac{1}{\sqrt{2}} \op{1}_{q_c} \otimes (c_{10} \mathbbm{1} + c_{11} \sigma_X)_{q_t}, \\
    && E_{01} = \frac{1}{\sqrt{2}} \op{0}_{q_c} \otimes (c_{00} \mathbbm{1} - c_{01} \sigma_X)_{q_t} - \frac{1}{\sqrt{2}} \op{1}_{q_c} \otimes (c_{10} \mathbbm{1} - c_{11} \sigma_X)_{q_t}, \\
    && E_{10} = \frac{1}{\sqrt{2}} \op{0}_{q_c} \otimes (c_{11} \mathbbm{1} + { c_{10}} \sigma_X)_{q_t} + \frac{1}{\sqrt{2}} \op{1}_{q_c} \otimes (c_{01} \mathbbm{1} + c_{00} \sigma_X)_{q_t}, \\
    && E_{11} = \frac{1}{\sqrt{2}} \op{0}_{q_c} \otimes (c_{11} \mathbbm{1} - { c_{10}} \sigma_X)_{q_t} - \frac{1}{\sqrt{2}} \op{1}_{q_c} \otimes (c_{01} \mathbbm{1} - c_{00} \sigma_X)_{q_t}.
    \end{eqnarray}
\end{proposition}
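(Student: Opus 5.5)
The plan is to track the protocol of Section~\ref{sec:cat-entangler-cat-disentangler} on an arbitrary input, with the Bell pair replaced by $\vert \tilde{q}_{AB}\rangle$, and to read off one Kraus operator per pair of classical measurement outcomes. By linearity it suffices to take a pure input of $q_c q_t$ and write it as $\sum_{a\in\{0,1\}} \vert a\rangle_{q_c}\otimes\vert\phi_a\rangle_{q_t}$, where the $\vert\phi_a\rangle$ are unnormalized. The joint initial state is then $\sum_{a,i,j} c_{ij}\,\vert a\rangle_{q_c}\vert i\rangle_{q_A}\vert j\rangle_{q_B}\vert\phi_a\rangle_{q_t}$. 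Let $m$ be the outcome of measuring $q_A$ and $k$ the outcome of measuring $q_B$. The post-measurement map on $q_c q_t$ for outcomes $(m,k)$, after the classically controlled corrections, is some linear operator $E_{mk}$. Summing over the (unrecorded) outcomes then gives $\mathcal{K}^{\mathbf{p}}(\rho)=\sum_{m,k}E_{mk}\rho E_{mk}^{\dag}$.

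The steps, in order, are as follows.
\begin{enumerate}
\item \textbf{Cat-entangler.} The local CNOT with control $q_c$ and target $q_A$ maps $\vert a\rangle\vert i\rangle_{q_A}$ to $\vert a\rangle\vert i\oplus a\rangle_{q_A}$. Projecting $q_A$ onto $\vert m\rangle$ forces $i=m\oplus a$, and the correction $\sigma_X^m$ on $q_B$ relabels $j\mapsto j\oplus m$. The resulting unnormalized state is $\sum_{a,j} c_{m\oplus a,\,j}\,\vert a\rangle_{q_c}\vert j\oplus m\rangle_{q_B}\vert\phi_a\rangle_{q_t}$.
\item \textbf{Local CNOT on QPU B.} The CNOT with control $q_B$ and target $q_t$ applies $\sigma_X^{\,j\oplus m}$ to $q_t$.
\item \textbf{Cat-disentangler.} The Hadamard on $q_B$ followed by projection onto $\vert k\rangle$ contributes the factor $(-1)^{k(j\oplus m)}/\sqrt{2}$. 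The correction $\sigma_Z^k$ on $q_c$ contributes $(-1)^{ka}$.
\end{enumerate}
Collecting these factors should give the closed form
\begin{equation*}
E_{mk}=\frac{1}{\sqrt{2}}\sum_{a\in\{0,1\}}(-1)^{ka}\,\op{a}_{q_c}\otimes\Big(\sum_{j\in\{0,1\}}(-1)^{k(j\oplus m)}\,c_{m\oplus a,\,j}\,\sigma_X^{\,j\oplus m}\Big)_{q_t}.
\end{equation*}

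I would then specialize this formula to the four outcome pairs $(m,k)\in\{0,1\}^2$ and check that it matches the four displayed operators. For example, $(m,k)=(1,1)$ with $a=1$ gives $-(c_{01}\mathbbm{1}-c_{00}\sigma_X)$, in agreement with $E_{11}$.

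As a consistency check I would verify $\sum_{m,k}E_{mk}^{\dag}E_{mk}=\mathbbm{1}$. The sum over $k$ cancels the cross terms between the $a=0$ and $a=1$ blocks, and also the $\mathbbm{1}$–$\sigma_X$ cross terms inside each block, because these carry opposite signs for $k=0$ and $k=1$. The remaining diagonal terms add up to $\sum_{ij}|c_{ij}|^2\,\mathbbm{1}=\mathbbm{1}$, using the normalization in Eq.~\eqref{eq:adversary_attack}.

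The main obstacle is bookkeeping rather than anything conceptual. The XOR relabelling $j\mapsto j\oplus m$ after the $\sigma_X$ correction has to be propagated consistently into both the Hadamard phase and the power of $\sigma_X$ on $q_t$. The $\sigma_Z$ correction must also be attached to the control index $a$, not to $j$. I would fix these index conventions once in the closed form above and then obtain all four Kraus operators by direct substitution.
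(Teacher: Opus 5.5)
Your proposal is correct and takes the same route as the paper's proof in Appendix~\ref{app:noisy_cnot_cat}: propagate the state through the cat-entangler, the local CNOT on QPU B, and the cat-disentangler, and read off one Kraus operator per outcome pair. Your closed form for $E_{mk}$ reproduces all four displayed operators and makes explicit the $m=1$ branch that the paper leaves to ``a similar treatment''; the completeness check is an addition the paper omits. One small correction to that check: each $E_{mk}$ is block diagonal with respect to $\op{0}_{q_c}$ and $\op{1}_{q_c}$, so $E_{mk}^{\dagger}E_{mk}$ has no cross terms between the $a=0$ and $a=1$ blocks in the first place, and only the $\mathbbm{1}$--$\sigma_X$ cross terms inside each block need the sum over $k$ to cancel.
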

 Proposition 1 establishes that adversarial perturbations at the level of shared entanglement translate directly into structured noise at the level of non-local gate operations. In fact, the resulting noisy channel $\mathcal{K}^{\mathbf{p}}(\cdot)$ depends explicitly on the perturbation  parameters $\mathbf{p}$ providing a principled mapping from entanglement-level attacks to gate-level noise. Detailed proof can be found in Appendix~\ref{app:noisy_cnot_cat}.
 
 We now consider two special cases. In the ideal adversary-free setting, where $c_{00} = 1/\sqrt{2} = c_{11}$ and $c_{01} = 0 = c_{10}$,  all Kraus operators coincide with $E_{ij}=\frac{1}{2}U_{\rm CNOT}$, for all $i,j$, where $U_{\rm CNOT}= \op{0}_{q_c} \otimes \mathbbm{1}_{q_t} +  \op{1}_{q_c} \otimes (\sigma_X)_{q_t}$. Hence, $\mathcal{K}^{\pbf}$ reduces to the ideal unitary CNOT channel $\mathcal{K}^{\pbf}(\cdot)=U_{\rm CNOT}(\cdot) U_{\rm CNOT}^{\dag}.$  In contrast, when $c_{01} = 1/\sqrt{2} = c_{10}$ and $c_{00} = 0 = c_{11}$, we get $E_{00}=E_{10}=\frac{1}{2}U_{\rm flip}$ and $E_{01}=E_{11}=-\frac{1}{2}U_{\rm flip}$, where  $U_{\rm flip}= \op{0}_{q_c} \otimes (\sigma_X)_{q_t} +  \op{1}_{q_c} \otimes \mathbbm{1}_{q_t}$. Thus, the channel $\mathcal{K}^{\pbf}$ reduces to the flipped CNOT unitary channel up to a global phase. 

\section{Kraus Expressibility  for Noisy Quantum Circuits}
As discussed in Section~\ref{sec:VQAs}, the success of  VQAs depends critically on the choice of the sequence of transformations applied to the initial state. In standard VQAs, these are parameterized unitary channels (see Eq.~\eqref{eq:unitarychannels}). However, in the presence of adversarially-induced noise, the effective transformations are no longer unitary, but quantum channels. To capture this, we generalize the standard unitary-ansatz to a parameterized quantum channel \textit{(PQCh)-ansatz}.

We define a PQCh-ansatz {of depth $L$} as a composition of $L$ quantum channels: \begin{align}\mathcal{E}_{\vec{\theta}}^{\mathbf{p}} = \Kcal_L (\mathbf{p}_L,\vec{\theta}_L) \circ \Kcal_{L-1} (\mathbf{p}_{L-1},\vec{\theta}_{L-1}) \circ ... \circ \Kcal_1 (\mathbf{p}_1,\vec{\theta}_1), \label{eq:PQCh-ansatz} \end{align} where $\mathbf{p}_j$ and $\vec{\theta}_j$ denote the noise and trainable parameters of the $j$-th channel $\Kcal_j$, respectively.
When all $\Kcal_j$'s collapse to noise-free unitary channels, the PQCh-ansatz reduces to the standard ansatz in  Eq.~\eqref{eq:unitarychannels}. For fixed noise parameter $\mathbf{p}=(\mathbf{p_1},\hdots,\mathbf{p}_L)$, varying $\vec{\theta}$ induces an ensemble $\mathcal{T}_{\pbf}=\{\mathcal{E}^{\mathbf{p}}_{\vec{\theta}}: \thetavec \in \Theta \}$ of fixed-noise trainable channels. Further variation over noise parameters $\pbf \in \Pi$ yields the full ensemble $\mathcal{T}=\{\mathcal{T}_{\pbf}: \pbf \in \Pi\}$ of noisy trainable transformations.

VQA success requires that the target solution to a problem lies within the ansatz-induced ensemble $\mathcal{T}$, i.e., $\mathcal{S} \cap \mathcal{T} \neq \emptyset$, where $\mathcal{S}$ denotes the set of solution transformations. In practice, $\mathcal{S}$ is unknown, making this condition difficult to verify. A common design principle, widely adopted in problem-agnostic settings, is therefore to ensure that the ansatz explores the transformation space broadly and uniformly, increasing the likelihood of approximating the optimal solution. This property is known as \textit{expressibility} \cite{sim2019expressibility, nakaji2021expressibility, holmes2022connecting}. 

 In the adversarial setting,  the adversarially-induced noise can distort this exploration and drive the PQCh-ansatz away from the optimal transformation. Consequently, even if the noiseless ansatz can cover the solution transformation, the adversarially-perturbed PQCh-ansatz may not. This motivates a notion of expressibility that explicitly accounts for noise and adversarial effects. We formalize this via \textit{Kraus expressibility}, which quantifies how well noisy quantum circuits explore the space of ideal unitary transformations (see Fig.~\ref{fig:vqe_scenario}\textcolor{blue}{$(\text{b})$} for an illustration).

\begin{definition}
    The \textit{Kraus expressibility} of the PQCh-ansatz is defined via the following super-operator,
    \begin{eqnarray}
    \mathcal{A}_{\mathcal{T}}^t (\cdot) \coloneqq \int_{\mathcal{U}(d)} d\mu (V) V^{\otimes t} (\cdot) (V^\dag)^{\otimes t} 
    - \int_{\mathcal{T}} d\nu(E) \sum_{k_1...k_l} (\otimes_{j=1}^t E_{k}^{(j)}) ~ (\cdot) ~ (\otimes_{j=1}^t E_{k}^{(j)\dag}), \label{eq:expressivity_operator_main_text}
\end{eqnarray}
where $d\mu(V)$ is the volume element of the Haar measure over the unitary group $\mathcal{U}(d)$ with $d=2^n$, $d\nu(E)$ is the volume element corresponding to uniform distribution over the PQCh-ensemble $\mathcal{T}$, and $E_{k}^{(j)}$ denote the k-th Kraus operator of the PQCh-ansatz for the $j$-th copy.
\end{definition}
Intuitively, Kraus expressibility quantifies how well the PQCh-ensemble $\mathcal{T}$ uniformly explores the unitary group $\mathcal{U}(d)$.  In particular, if $\mathcal{A}_{\mathcal{T}}^t (X)=0$ for all operators $X$, then averaging over $\mathcal{T}$ matches Haar averaging  over $\mathcal{U}(d)$ upto the $t$-th moment, i.e., $\mathcal{T}$ forms a $t$-design. In this work, we focus on $t=2$. 
For VQA optimization, we evaluate the Kraus expressibility of $\mathcal{T}$ with respect to the input state $\rho$ via the norm
\begin{eqnarray}
\Delta_{\mathcal{T}}^\rho \coloneqq || \mathcal{A}_{\mathcal{T}}^2 (\rho^{\otimes 2}) ||_2.
\end{eqnarray}
Smaller values of $\Delta_{\mathcal{T}}^\rho$ indicate higher Kraus expressibility. This extends unitary expressibility \cite{holmes2022connecting} to noisy and  adversarial settings.
\begin{theorem} \label{theorem:kraus_expressibility}
The  Kraus expressibility norm of the PQCh-ansatz with respect to the state $\rho$ is given by   
    \begin{eqnarray}
        (\Delta_{\mathcal{T}}^\rho)^2 = (\alpha^2 + \beta^2) ~ d^2 + 2 \alpha \beta d - 2 [\alpha + \beta \bar{\nu}] + \mathcal{N}_{\text{noise}}, \label{eq:Kraus_norm}
    \end{eqnarray} 
    where $\alpha = \frac{d - \Tr \rho^2}{d(d^2 - 1)}$, $\beta = \frac{d \Tr \rho^2 - 1}{d(d^2 - 1)}$, $\bar{\nu}$ is the ensemble averaged purity \footnote{A quantum state $\rho$ is pure if ${\rm Tr}(\rho^2)=1$ and is mixed if ${\rm Tr}(\rho^2)<1$.} of the input state $\rho$ after evolving through the PQCh-ansatz, and $\mathcal{N}_{\text{noise}}$ is the noise fluctuation term defined as
    $\mathcal{N}_{\text{noise}} = \int_{\mathcal{T}}\int_{\mathcal{T}}d\tau(E)d\nu(F)\left[\sum_{j,k} \Tr(E_{k}\rho E_{k}^{\dagger}F_{j}\rho F_{j}^{\dagger})\right]^2$
    with $\{E_k\}$ and $\{F_k\}$ representing Kraus operators corresponding to independent channel realizations sampled from 
    $\mathcal{T}$. 
\end{theorem}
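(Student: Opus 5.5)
The natural approach is to expand the squared Hilbert–Schmidt norm and treat the three resulting pieces separately. Write $\mathcal{A}_{\mathcal{T}}^2(\rho^{\otimes 2}) = H - N$, where $H = \int d\mu(V)\, (V\rho V^\dagger)^{\otimes 2}$ is the Haar part and $N = \int_{\mathcal{T}} d\nu(E) \sum_{k_1,k_2} (E_{k_1}\rho E_{k_1}^\dagger)\otimes(E_{k_2}\rho E_{k_2}^\dagger) = \int_{\mathcal{T}} d\nu(E)\, \mathcal{E}(\rho)^{\otimes 2}$ is the noisy-ensemble part. Here we read the Kraus sum in Eq.~\eqref{eq:expressivity_operator_main_text} as independent over the two copies, so that it factorizes as $\mathcal{E}\otimes\mathcal{E}$. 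Then
\[
(\Delta_{\mathcal{T}}^\rho)^2 = \Tr(H^2) - 2\Tr(HN) + \Tr(N^2),
\]
and each of these three terms will be matched to one part of Eq.~\eqref{eq:Kraus_norm}.

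\textbf{Haar part.} Apply standard second-moment Weingarten calculus. The twirl of $\rho^{\otimes 2}$ lies in the span of $\mathbbm{1}$ and the swap $\mathrm{SWAP}$: $H = \alpha \mathbbm{1} + \beta\, \mathrm{SWAP}$. The coefficients are fixed by the two constraints $\Tr(H) = 1$ and $\Tr(H\,\mathrm{SWAP}) = \Tr\rho^2$, which gives exactly $\alpha = \frac{d-\Tr\rho^2}{d(d^2-1)}$ and $\beta = \frac{d\Tr\rho^2-1}{d(d^2-1)}$. Then, using $\Tr\mathbbm{1} = d^2$ and $\Tr\,\mathrm{SWAP} = d$, we get $\Tr(H^2) = (\alpha^2+\beta^2)d^2 + 2\alpha\beta d$.

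\textbf{Cross term.} Use linearity together with $\Tr(\mathbbm{1}\cdot\sigma^{\otimes2}) = (\Tr\sigma)^2$ and $\Tr(\mathrm{SWAP}\cdot\sigma^{\otimes2}) = \Tr\sigma^2$ for $\sigma = \mathcal{E}(\rho)$. Since each channel is trace preserving, $\Tr\sigma = 1$, so
\[
\Tr(HN) = \int d\nu(E)\,\bigl[\alpha + \beta \Tr(\mathcal{E}(\rho)^2)\bigr] = \alpha + \beta\bar\nu,
\]
where $\bar\nu$ is the ensemble-averaged output purity. This yields the term $-2[\alpha+\beta\bar\nu]$.

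\textbf{Noise term.} Write $N^2$ as a double integral over two independent samples $\mathcal{E},\mathcal{F}$ from $\mathcal{T}$:
\[
\Tr(N^2) = \iint \Tr\bigl(\mathcal{E}(\rho)\mathcal{F}(\rho)\bigr)^2 .
\]
Expanding in Kraus operators gives $\bigl[\sum_{j,k}\Tr(E_k\rho E_k^\dagger F_j\rho F_j^\dagger)\bigr]^2$, which is $\mathcal{N}_{\text{noise}}$.

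The main obstacle is interpreting the definition carefully. Two points need to be checked: that the Kraus index sum over the $t=2$ copies genuinely factorizes into $\mathcal{E}^{\otimes2}$ (rather than forcing the same Kraus index on both copies), and that the measure $d\tau$ appearing in $\mathcal{N}_{\text{noise}}$ coincides with $d\nu$. Once the convention $N=\int \mathcal{E}(\rho)^{\otimes2}$ is fixed, the remaining steps are routine. It will also help to note that Hermiticity of $H$ and $N$ makes the two cross terms $\Tr(HN)$ and $\Tr(NH)$ equal, which is what justifies the factor of $2$.
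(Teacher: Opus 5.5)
Your proposal is correct and follows essentially the same route as the paper. The paper likewise expands $\|\mathcal{A}^{(2)}-\mathcal{M}^{(2)}\|_2^2$ into $\Tr[(\mathcal{A}^{(2)})^2]+\Tr[(\mathcal{M}^{(2)})^2]-2\Re\Tr[\mathcal{A}^{(2)}\mathcal{M}^{(2)}]$, reads the Kraus sum as the factorized $\sum_{k_1,k_2}E_{k_1}\otimes E_{k_2}$, and evaluates the three terms just as you do: the Haar twirl as $\alpha\mathbbm{1}+\beta\,\mathrm{SWAP}$ fixed by the same two trace constraints, then the SWAP trick with trace preservation, then a double integral over independent realizations. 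The only real difference is that you cite standard second-moment Weingarten/Schur--Weyl calculus, whereas the paper proves the commutant form by hand in its Lemma~1.
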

Proof is provided in  Appendix~\ref{app:Kraus_expressivity}.  Theorem~\ref{theorem:kraus_expressibility} shows that adversarial effects  influence the Kraus expressibility through two distinct mechanisms:
$(i)$ Average state purity loss: As the state $\rho$ evolving through the PQCh-ansatz becomes increasingly mixed, average state purity $\bar{\nu}$ decreases, reducing the Kraus expressibility, and  $(ii)$ Diversity loss due to noise-induced correlations: The term $\mathcal{N}^{\rm noise}$ captures the  expected squared overlap between states evolved under independent channel realizations sampled from $\mathcal{T}$. Hence, high correlations across channel realizations in the ensemble (e.g., under strongly mixing noise such as depolarizing noise) reduces the diversity of the explored transformations, thereby lowering Kraus expressibility. In particular, in the high depolarizing noise regime (noise strength $\rightarrow 1$), outputs concentrate toward the maximally mixed state, leading to reduced purity and increased overlap across realizations. Both effects increase $(\Delta_{\mathcal{T}}^\rho)^2$, thereby reducing expressibility.

Intuitively, a highly Kraus expressive PQCh-ensemble  requires both high average purity (to preserve the quantum structure) and sufficiently diverse channel realizations (to explore the transformation space). Adversarial noise can disrupt either (or both), making the resulting ansatz less capable of approximating target solutions. Notably, even with negligible purity loss, strong noise-induced correlations can significantly degrade expressibility. In the absence of adversarial noise, these effects vanish, and Kraus expressibility reduces to standard unitary expressibility.

While Theorem~\ref{theorem:kraus_expressibility} characterizes the influence of adversarial perturbations over the full space of noisy trainable transformations, practical VQA deployments often operate in a post-training (inference) regime, where circuit parameters are optimized offline and subsequently fixed. In this setting, the adversary acts \textit{after training}, 
and thus perturbs the induced channel via  the noise parameter $\pbf$. The following corollary characterizes Kraus expressibility in this regime.
\begin{corollary} \label{corollary:expressibility_limiting_case}
    When the PQCh-ensemble reduces to a composite channel $\mathcal{E}_{\thetavec}^{\pbf}$ with fixed  $(\pbf,\thetavec)$, the  Kraus expressibility norm with respect to the input state $\rho$ is given by
    \begin{eqnarray}
        (\Delta_{\mathcal{E}_{\thetavec}^{\pbf}}^\rho)^2 = (\alpha^2 + \beta^2) d^2 + 2 \alpha \beta d - 2 [\alpha + \beta \nu] + \nu^2, \label{eq:posttraining_Krausexpressibility}
        \vspace{-0.1cm}
    \end{eqnarray}
    where $\nu$ is the purity of the state after evolving through $\mathcal{E}_{\thetavec}^{\pbf}$, and $(\alpha,\beta)$ are as defined in Theorem~\ref{theorem:kraus_expressibility}.
\end{corollary}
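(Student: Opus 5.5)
The plan is to specialize Theorem~\ref{theorem:kraus_expressibility} to a degenerate ensemble. Take $\mathcal{T}=\{\mathcal{E}_{\thetavec}^{\pbf}\}$, a single channel, so that the uniform measure $d\nu$ (and likewise $d\tau$) is a point mass at this channel. Every ensemble average then collapses to evaluation at that one element.

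\textbf{Step 1: the Haar term.} The Haar part of $\mathcal{A}^2_{\mathcal{T}}(\rho^{\otimes 2})$ does not depend on $\mathcal{T}$ at all. By the standard second-moment Weingarten formula it equals $\alpha\,\mathbbm{1} + \beta\,\mathrm{SWAP}$, with $\alpha,\beta$ as in the theorem. Hence the contributions $(\alpha^2+\beta^2)d^2 + 2\alpha\beta d$ carry over unchanged. These come from $\Tr\mathbbm{1} = d^2$, $\Tr\mathrm{SWAP}^2 = d^2$ and $\Tr\mathrm{SWAP} = d$.

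\textbf{Step 2: the cross term.} The cross term is $-2\Tr\big[(\alpha\mathbbm{1}+\beta\,\mathrm{SWAP})\,\sigma^{\otimes 2}\big]$, where $\sigma=\mathcal{E}_{\thetavec}^{\pbf}(\rho)$ and the Kraus sum over the two copies factorizes into $\sigma\otimes\sigma$. Using $\Tr\sigma^{\otimes 2}=1$ and $\Tr(\mathrm{SWAP}\,\sigma^{\otimes 2})=\Tr\sigma^2=\nu$, this becomes $-2[\alpha+\beta\nu]$. This is exactly the point-mass case of $\bar\nu$.

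\textbf{Step 3: the noise term.} Set $E=F=\mathcal{E}_{\thetavec}^{\pbf}$ in $\mathcal{N}_{\text{noise}}$. Then $\sum_{j,k}\Tr(E_k\rho E_k^\dagger F_j\rho F_j^\dagger)=\Tr(\sigma\sigma)=\nu$, so $\mathcal{N}_{\text{noise}}=\nu^2$. As a sanity check, this agrees with the direct computation $\|\sigma^{\otimes 2}\|_2^2=(\Tr\sigma^2)^2$.

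The main subtlety is whether the theorem, which is stated for a uniform distribution over a parametrized family, legitimately applies to a Dirac measure. To avoid relying on that, I would simply redo the short expansion $\|A-B\|_2^2=\Tr A^2-2\Tr AB+\Tr B^2$ directly, with $A=\alpha\mathbbm{1}+\beta\,\mathrm{SWAP}$ and $B=\sigma^{\otimes2}$. Each of the three pieces is computed above, so the argument is self-contained given the Haar second-moment identity.
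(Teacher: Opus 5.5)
Your proposal is correct and follows the paper's own route: the corollary is Theorem~\ref{theorem:kraus_expressibility} specialized to a single-channel (point-mass) ensemble. There $\bar{\nu}$ collapses to $\nu$ and $\mathcal{N}_{\text{noise}}$ to $(\Tr \sigma^2)^2=\nu^2$, exactly as in your Steps 2--3. Your fallback of directly expanding $\Vert \alpha\mathbbm{1}+\beta\,\mathrm{SWAP}-\sigma^{\otimes 2}\Vert_2^2$ is just the paper's proof of the theorem restricted to one channel, and it sidesteps the question of whether a Dirac measure counts as a ``uniform'' ensemble.
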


It follows that, in the post-training regime, the expressibility norm $ (\Delta_{\mathcal{E}_{\thetavec}^{\pbf}}^\rho)^2$ depends solely on the output state purity, which is governed by  the noise parameters $\pbf$. In this regime, the ensemble-level noise correlation term $\mathcal{N}_{\rm noise}$ in Theorem~\ref{theorem:kraus_expressibility} collapses to a function of $\nu$. Moreover, since $(\Delta_{\mathcal{E}_{\thetavec}^{\pbf}}^\rho)^2$ is monotonically increasing in $\nu$,  increasing noise (which reduces purity) decreases $(\Delta_{\mathcal{E}_{\thetavec}^{\pbf}}^\rho)^2$. 
\vspace{-0.3cm}
\section{Trainability-Expressibility Trade-offs for Noisy Quantum Circuits}

While Kraus expressibility determines whether a PQCh-ansatz can represent the target solution, it does not guarantee that the solution can be \textit{efficiently learned}. Successful optimization additionally requires the ansatz to be \textit{trainable}, i.e., the cost landscape must exhibit sufficiently large gradients for the solution to be found. 

For a PQCh-ansatz as in Eq.~\eqref{eq:PQCh-ansatz} with input state $\rho_0$, we define the cost function as \begin{align}C_{\rho_0,\mathcal{H}}={\rm Tr}(\mathcal{H}~\mathcal{E}^{\pbf}_{\thetavec}(\rho_0)). \label{eq:cost}\vspace{-0.1cm}\end{align}  Then, the gradient of the cost function with respect to the $k$-th trainable parameter $\thetavec_k$ is denoted by the partial derivative $\partial_k C_{\rho_0,\mathcal{H}}:= \partial C_{\rho_0,\mathcal{H}}(\thetavec)/\partial{\thetavec_k}$. Assuming $\thetavec_k \sim {\rm Unif} [0,2\pi]$, it can be shown that
$ \Ebb_{\thetavec_k}[\partial_k C_{\rho_0,\mathcal{H}}]=0$ for all $k$  (see Appendix~\ref{app:quantum_channel_expressivity}). In this unbiased cost landscape, the trainability of the ansatz is governed by the \textit{gradient variance}: ${\rm var}_{\thetavec_k}(\partial_k C_{\rho_0,\mathcal{H}} ) =\Ebb_{\thetavec_k}[(\partial_k C_{\rho_0,\mathcal{H}})^2 ], $ which quantifies the extent to which the gradient fluctuates away from zero. In particular, if the variance decays exponentially with the number of qubits $n$, the cost landscape exhibits \textit{barren plateaus}, rendering training intractable \cite{mcclean2018barren}.

 We will next relate the Kraus expressibility to the gradient variance. To formalize this, we decompose the PQCh-ansatz in Eq.~\eqref{eq:PQCh-ansatz} around the $k$-th parameter as 
$\mathcal{E}^{\pbf}_{\thetavec}= \mathcal{E}_{\Lcal} \circ \mathcal{K}_k(\pbf_{k},\thetavec_{k}) \circ \mathcal{E}_{\Rcal},  $
where $\mathcal{E}_{\Lcal}= \mathcal{K}_{L}(\pbf_L,\thetavec_L) \circ \hdots \mathcal{K}_{k+1}(\pbf_{k+1},\thetavec_{k+1})$ and 
$\mathcal{E}_{\Rcal}= \mathcal{K}_{k-1}(\pbf_{k-1},\thetavec_{k-1}) \circ \hdots \mathcal{K}_{1}(\pbf_{1},\thetavec_{1})$ denote the left and right channel compositions, respectively. Let $\mathcal{T}_{\Lcal}$ and $\mathcal{T}_{\Rcal}$ denote the corresponding channel ensembles. 
We further define the reference gradient variance ${\rm var}_{\mathcal{R}}(\partial_k C_{\rho_0,\mathcal{H}}):=\Ebb_{\mathcal{E}_{\Lcal}\sim \mathcal{T}_{\Lcal}, \mathcal{E}_{\Rcal} \sim \mathcal{U}_{\Rcal}}[{\rm var}_{\thetavec_k}(\partial_k C_{\rho_0,\mathcal{H}}) ]$ \footnote{We write $\mathcal{E} \sim \mathcal{T}$ to denote that the channel $\mathcal{E}$ is sampled from the ensemble $\mathcal{T}$.} obtained when the right ensemble $\mathcal{T}_{\Rcal}$ forms a unitary 2-design $\mathcal{U}_{\mathcal{R}}$. This corresponds to the setting where the right subcircuit is maximally expressive (Haar random), and yields variance scaling  as $\mathcal{O}(2^{-2n})$, decaying exponentially with $n$ \cite{mcclean2018barren, holmes2022connecting}.

The following theorem bounds the deviation of the expected gradient variance from this reference via the Kraus expressibility norm. 
\begin{theorem} \label{theorem:trainability_expressibility}
    Consider the generic cost function $C_{\rho_0,\mathcal{H}}$ defined in Eq.~\eqref{eq:cost} for a PQCh-ansatz as in Eq.~\eqref{eq:PQCh-ansatz}. Then,  the deviation of the expected gradient variance from the reference variance $\mathrm{var}_{\Rcal}(\partial_k C_{\rho_0, \mathcal{H}})$ is upper bounded as
    \begin{eqnarray}
    |\mathbb{E}_{\mathcal{E}_{\Lcal} \sim \mathcal{T}_{\Lcal}, \mathcal{E}_{\Rcal} \sim \mathcal{T}_{\Rcal}}[{\rm var}_{\thetavec_k}(\partial_k C_{\rho_0, \mathcal{H}})] - {\rm var}_{\Rcal}(\partial_k C_{\rho_0, \mathcal{H}})|  
    \le 4 \Vert \mathcal{A}^2_{\mathcal{T_R}}(\rho_0^{\otimes 2})\Vert_2 \int_{\mathcal{T}_{\Lcal}} d\nu(\mathcal{E}_{\Lcal}) ||\mathcal{E}_{\Lcal}^\dagger(\mathcal{H})||_2^2 . \label{eq:trainability_expressivity_theorem}
\end{eqnarray}
\end{theorem}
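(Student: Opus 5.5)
We follow the strategy of Holmes et al.\ for unitary expressibility and adapt it to Kraus operators. The plan is to write the gradient variance as a quantity that is linear in $\rho_0^{\otimes 2}$ after the right channel is applied. Then the difference between the $\mathcal{T}_{\Rcal}$-average and the Haar-average is exactly $\mathcal{A}^2_{\mathcal{T}_{\Rcal}}(\rho_0^{\otimes 2})$, and Cauchy--Schwarz finishes the argument.

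\emph{Step 1: gradient expression.} Take $\mathcal{K}_k(\pbf_k,\thetavec_k)$ to contain the rotation $e^{-i\thetavec_k V_k}$ composed with a fixed noisy part. Differentiating the cost $C_{\rho_0,\mathcal{H}}={\rm Tr}(\mathcal{E}_{\Lcal}^\dagger(\mathcal{H})\,\mathcal{K}_k(\mathcal{E}_{\Rcal}(\rho_0)))$ gives
\[
\partial_k C_{\rho_0,\mathcal{H}} = i\,{\rm Tr}\big([V_k,\tilde\rho]\,\tilde H\big),
\]
where $\tilde\rho$ and $\tilde H$ are obtained from $\mathcal{E}_{\Rcal}(\rho_0)$ and $\mathcal{E}_{\Lcal}^\dagger(\mathcal{H})$ by absorbing the $\thetavec_k$-dependent rotation and the fixed noise of layer $k$ into the appropriate side. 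Squaring and using ${\rm Tr}(A)^2={\rm Tr}(A^{\otimes 2})$, we obtain
\[
{\rm var}_{\thetavec_k}(\partial_k C_{\rho_0,\mathcal{H}}) = -\,\Ebb_{\thetavec_k}\,{\rm Tr}\Big( \big(\mathcal{E}_{\Rcal}(\rho_0)\big)^{\otimes 2}\, G_{\thetavec_k}\Big),
\]
with $G_{\thetavec_k}=\big([V_k, H_{\thetavec_k}]\big)^{\otimes 2}$ and $H_{\thetavec_k}$ the Heisenberg-evolved observable. Since $(\mathcal{E}_{\Rcal}(\rho_0))^{\otimes 2}=\sum_k E_k^{\otimes 2}\rho_0^{\otimes 2}E_k^{\dagger\otimes 2}$, averaging over $\mathcal{E}_{\Rcal}\sim\mathcal{T}_{\Rcal}$ yields the second (Kraus) term of the Kraus expressibility super-operator in Eq.~\eqref{eq:expressivity_operator_main_text} applied to $\rho_0^{\otimes 2}$. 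Replacing $\mathcal{T}_{\Rcal}$ by a 2-design yields the Haar term. Hence the deviation equals
\[
\big|\,\Ebb_{\mathcal{E}_{\Lcal}}\Ebb_{\thetavec_k}\,{\rm Tr}\big(\mathcal{A}^2_{\mathcal{T}_{\Rcal}}(\rho_0^{\otimes 2})\,G_{\thetavec_k}\big)\big|.
\]
Linearity lets us pull the $\mathcal{E}_{\Lcal}$ and $\thetavec_k$ expectations outside.

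\emph{Step 2: bounding.} Apply Cauchy--Schwarz in the Hilbert--Schmidt inner product:
\[
\big|{\rm Tr}(\mathcal{A}G)\big| \le \Vert\mathcal{A}\Vert_2\,\Vert G\Vert_2,
\qquad
\Vert G\Vert_2 = \Vert[V_k,H_{\thetavec_k}]\Vert_2^2 .
\]
Then bound the commutator by
\[
\Vert[V_k,X]\Vert_2 \le 2\Vert V_k\Vert_\infty\Vert X\Vert_2 \le 2\Vert X\Vert_2,
\]
using the normalisation of Pauli-type generators (norm $\le 1$). This gives the factor $4\Vert H_{\thetavec_k}\Vert_2^2$.

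\emph{Main obstacle.} The delicate step is showing $\Vert H_{\thetavec_k}\Vert_2\le\Vert\mathcal{E}_{\Lcal}^\dagger(\mathcal{H})\Vert_2$, i.e.\ that the extra layer-$k$ channel (rotation plus adversarial noise) does not increase the Hilbert--Schmidt norm in the Heisenberg picture. For the unitary rotation this holds trivially. For the noisy part we would arrange the decomposition so that the noise of $\mathcal{K}_k$ is absorbed into $\mathcal{E}_{\Rcal}$ and only the rotation sits between the two blocks. If that is not possible, we would instead use unitality of the Kraus CNOT channel: its Kraus operators give $\sum_{ij}E_{ij}E_{ij}^\dagger=\mathbb{I}$ when the channel is a mixture of controlled-Pauli-type maps, and a unital channel is HS-contractive on its adjoint. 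Either route yields
\[
\Vert G\Vert_2\le 4\Vert\mathcal{E}_{\Lcal}^\dagger(\mathcal{H})\Vert_2^2 .
\]
Integrating this over $\mathcal{T}_{\Lcal}$ with measure $d\nu(\mathcal{E}_{\Lcal})$ gives Eq.~\eqref{eq:trainability_expressivity_theorem}.
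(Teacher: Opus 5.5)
Your argument follows the paper's proof step for step. You write ${\rm var}_{\thetavec_k}$ as $-\Ebb_{\thetavec_k}\Tr[\rho_{\Rcal}^{\otimes 2}[V_k,\tilde{\mathcal{H}}_{\Lcal}(\thetavec_k)]^{\otimes 2}]$, average over $\mathcal{E}_{\Rcal}$, and split the result into the Haar term (giving ${\rm var}_{\Rcal}$) plus a term involving $\mathcal{A}^2_{\mathcal{T}_{\Rcal}}(\rho_0^{\otimes 2})$. You then finish with Jensen, Cauchy--Schwarz, unitary invariance of $\Vert\cdot\Vert_2$ and a factor-$4$ commutator bound.

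One identity is false as written: $(\mathcal{E}_{\Rcal}(\rho_0))^{\otimes 2}\neq\sum_k E_k^{\otimes 2}\rho_0^{\otimes 2}E_k^{\dagger\otimes 2}$ in general. The right-hand side has trace $\sum_k(\Tr E_k\rho_0E_k^\dagger)^2<1$ whenever two Kraus outcomes occur with nonzero probability, e.g.\ full dephasing acting on $|+\rangle\langle +|$. The correct expansion is $\mathcal{E}_{\Rcal}^{\otimes 2}(\rho_0^{\otimes 2})=\sum_{k_1,k_2}(E_{k_1}\otimes E_{k_2})\rho_0^{\otimes 2}(E_{k_1}\otimes E_{k_2})^\dagger$, with independent Kraus indices on the two copies. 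This is exactly the paper's $\mathcal{M}^{(2)}_{\Rcal}(\rho_0^{\otimes 2})$, i.e.\ the independent-index reading of the second term of Eq.~\eqref{eq:expressivity_operator_main_text} that the appendix uses. The variance is linear in this operator, not in your single-index sum, so only with independent indices does the $\mathcal{T}_{\Rcal}$-average match the Kraus term of $\mathcal{A}^2_{\mathcal{T}_{\Rcal}}$. With that fix, the rest of your Step~1 goes through unchanged.

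The remaining differences are minor. You bound $\Vert[V_k,X]\Vert_2\le 2\Vert V_k\Vert_\infty\Vert X\Vert_2$. The paper instead expands $\Vert[V_k,\mathcal{H}_{\Lcal}]\Vert_2^2=2\Tr[\mathcal{H}_{\Lcal}^2]-2\Tr[V_k\mathcal{H}_{\Lcal}V_k\mathcal{H}_{\Lcal}]$ and uses $\Tr[A^2]\ge 0$ for the anticommutator $A$. The constant is the same, and yours needs only $\Vert V_k\Vert_\infty\le 1$. However, $V_k^2=\mathbbm{1}$ is still what gives $\Ebb_{\thetavec_k}[\partial_k C_{\rho_0,\mathcal{H}}]=0$, which you tacitly use when equating the variance with the second moment.

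On your ``main obstacle'': the paper never confronts it, because its proof silently takes $\mathcal{K}_k$ to be the bare rotation $e^{-i\thetavec_kV_k}$. Then $\mathcal{H}_{\Lcal}=\mathcal{E}_{\Lcal}^\dagger(\mathcal{H})$ exactly, which is your first route. Note that noise acting after the rotation must be absorbed into $\mathcal{E}_{\Lcal}$, not $\mathcal{E}_{\Rcal}$.

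Your unitality fallback does hold for the channel of Proposition~\ref{prop:noisy_cnot_cat}, but justify it directly rather than via a ``mixture of controlled-Pauli maps'' heuristic, since the $E_{ij}$ need not be proportional to unitaries. Each $E_{ij}$ is block-diagonal with blocks in $\mathrm{span}\{\mathbbm{1},\sigma_X\}$, hence normal, so $\sum_{ij}E_{ij}E_{ij}^\dagger=\sum_{ij}E_{ij}^\dagger E_{ij}=\mathbbm{1}$. The adjoint of a unital CPTP map is again unital CPTP, hence Hilbert--Schmidt contractive. This does not extend to general layer noise: for amplitude damping $\Phi$ with any rate $\gamma>0$, $\Phi^\dagger(|0\rangle\langle 0|)=|0\rangle\langle 0|+\gamma|1\rangle\langle 1|$, whose norm $\sqrt{1+\gamma^2}$ exceeds $1$.
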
 
Proof is provided in  Appendix~\ref{app:noisy_gradient_evolution}. Theorem~\ref{theorem:trainability_expressibility} provides a direct, analytical link between the trainability and expressibility of the PQCh-ansatz. 
 The deviation of the expected gradient variance from the reference ${\rm var}_{\Rcal}(\partial_k C_{\rho_0,\mathcal{H}}) $ is governed by two factors : $(i)$ the Kraus expressibility norm of the right sub-circuit evaluated on the input state, and $(ii)$ the observable attenuation $||\mathcal{E}_{\Lcal}^\dagger(\mathcal{H})||_2^2$, capturing the decay of the measurement signal when propagated backward through the left noisy channel. In the highly Kraus expressive regime for the preceding layers, i.e., $||\mathcal{A}_{\mathcal{T_R}}^2(\rho_0^{\otimes 2})||_2 \rightarrow 0$, the deviation vanishes and the expected gradient variance concentrates around the exponentially small baseline $\mathcal{O}(2^{-2n})$ of ${\rm var}_{\Rcal}(\partial_k C_{\rho_0,\mathcal{H}}) $, recovering the barren plateau behaviour. This characterization also explains the emergence of \textit{noise-induced barren plateaus} \cite{wang2021noise}. Noise, whether intrinsic or adversarially-induced, can 
 attenuate the observable through the factor $\Vert \mathcal{E}_{\Lcal}^{\dag}(\mathcal{H})\Vert_2$, leading to vanishing gradient variance even when the Kraus expressibility norm is non-negligible.  This occurs under strong depolarizing noise, where the Kraus expressibility norm remains large (low expressibility), while the attenuation factor for traceless observables decays exponentially with circuit depth.

We summarize the impact of strong adversarial attacks on distributed VQAs. A strong adversary perturbs shared entanglement to induce noise in non-local gates, reducing $\bar{\nu}$ by $\delta$ while remaining $\epsilon$-indistinguishable from ideal CNOT implementation. For a PQCh-ansatz with $L$ layers, this results in $\bar{\nu} \approx (1 - \mathcal{O}(\delta^2))^L$. 
While reduced purity affects Kraus expressibility, Theorem~\ref{theorem:trainability_expressibility} reveals a layered asymmetry in trainability.
For  parameters near the end of the PQCh-ansatz, trainability is bottlenecked by preceding layers approaching a 2-design, driving the expressibility norm $||\mathcal{A}_{\mathcal{T}_R}^{(2)} (\rho_0^{\otimes 2})||_2$ towards zero. In contrast, for parameters near the beginning of the ansatz, the trainability is limited by the observable attenuation factor, $\|\mathcal{E}_L^\dagger(\mathcal{H})\|_2^2$. A budget-constrained adversary can exploit this symmetry: attacking early layers flatten the cost landscape via signal attenuation, while attacking later layers suppress gradients through increased expressibility. By calibrating these effects, the adversary can systematically bias the optimization objective, degrading the final solution quality.
\vspace{-0.2cm}
\section{Numerical simulations}
We now present numerical simulations to study the effect of noise on the Kraus expressibility and the cost gradient scaling. We consider a 
2-local cost function  $\mathcal{H}_L \coloneqq \sigma_Z^1 \otimes \sigma_Z^2$, with $\sigma_Z$ measurement on the first two qubits. The input state is $\rho_0 = \op{\psi_0}^{\otimes n}$,  where $\ket{\psi_0} = e^{-i \pi \sigma_y/8} \ket{0}$. We employ a hardware-efficient PQCh-ansatz comprising of $L$ layers of single qubit rotation gates ($R_y$ and $R_z$ in sequence) followed by a noisy entangling layer. The entangling layer consists of a ladder of noisy CNOT gates, with noise adversarially-induced as in Proposition~\ref{prop:noisy_cnot_cat} and controlled via the concurrence of the shared entangled state. In the noise-free and maximally expressive version of the PQCh-ansatz, rotation angles are sampled uniformly from $[0,2\pi]$. Simulations are performed using Qulacs~\cite{suzuki2021qulacs} and standard Python numerical programming libraries. 

\begin{figure}
    \centering
    \subfigure[]{\label{subfig:expressibility_vs_depth_plot}\includegraphics[width=0.32\textwidth]{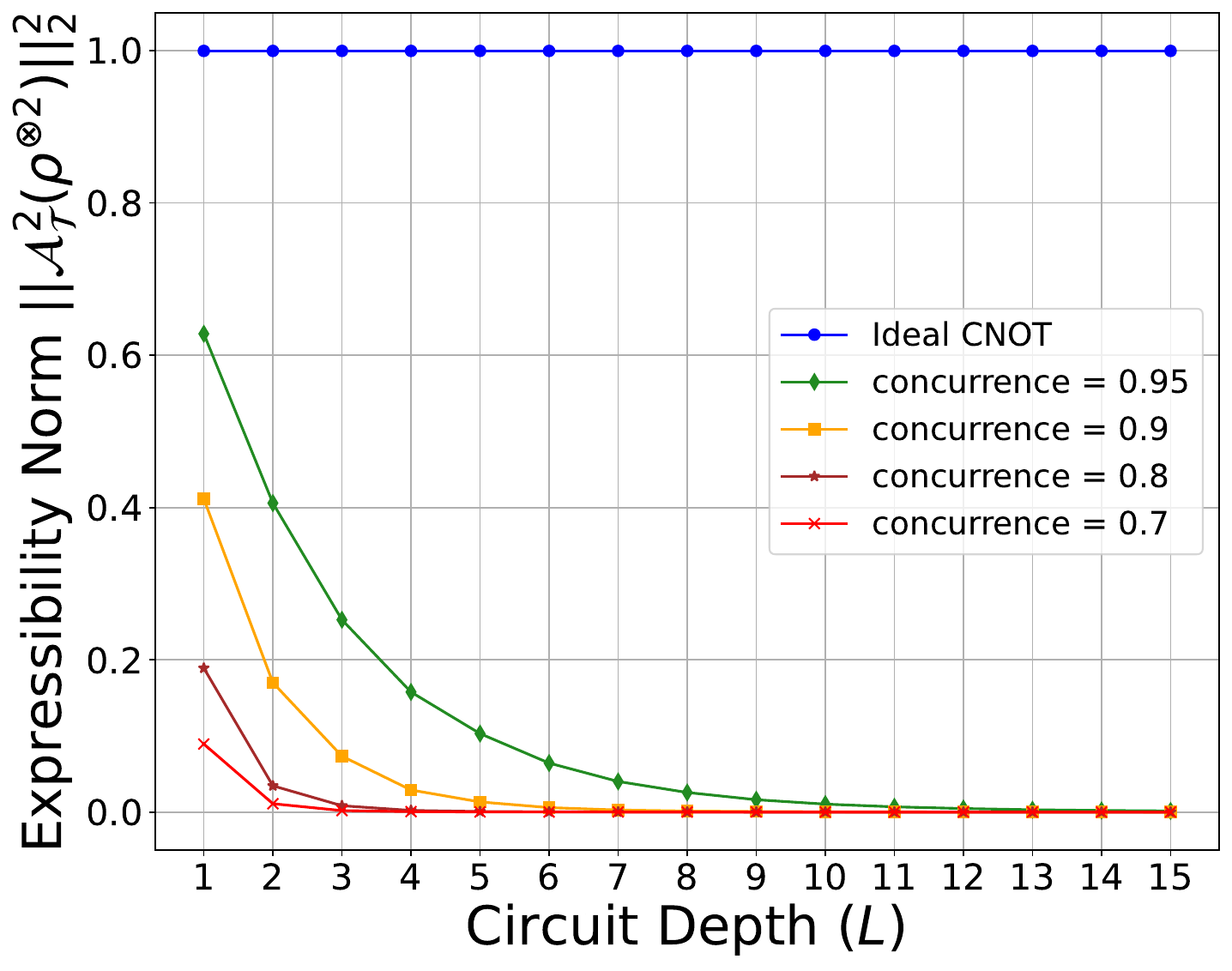}}
    \subfigure[]
    {\label{subfig:variance_vs_depth_plot} \includegraphics[width=0.32\textwidth]{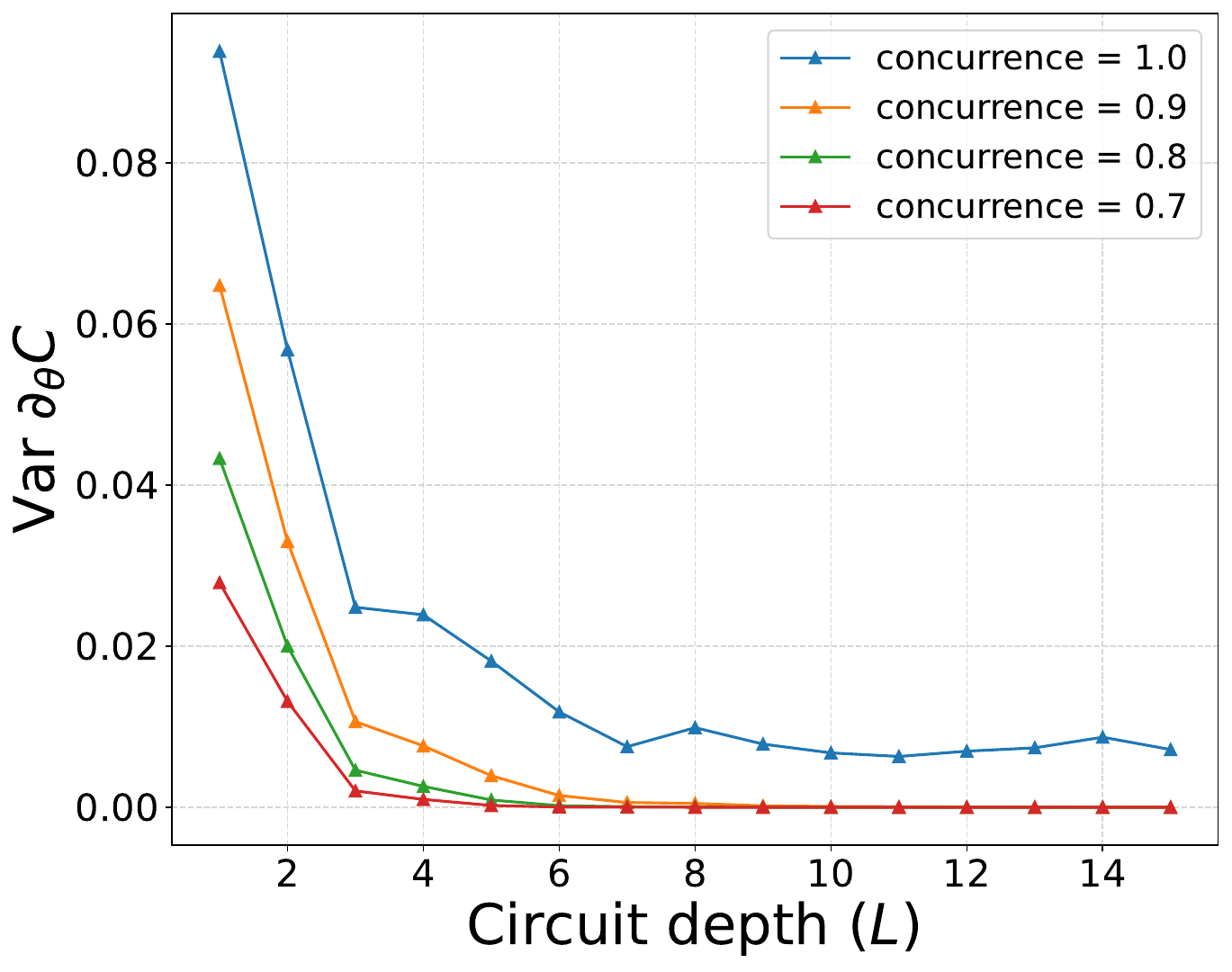}} 
    \subfigure[]{\label{subfig:variance_vs_concurrence_plot}\includegraphics[width=0.32\textwidth]{ 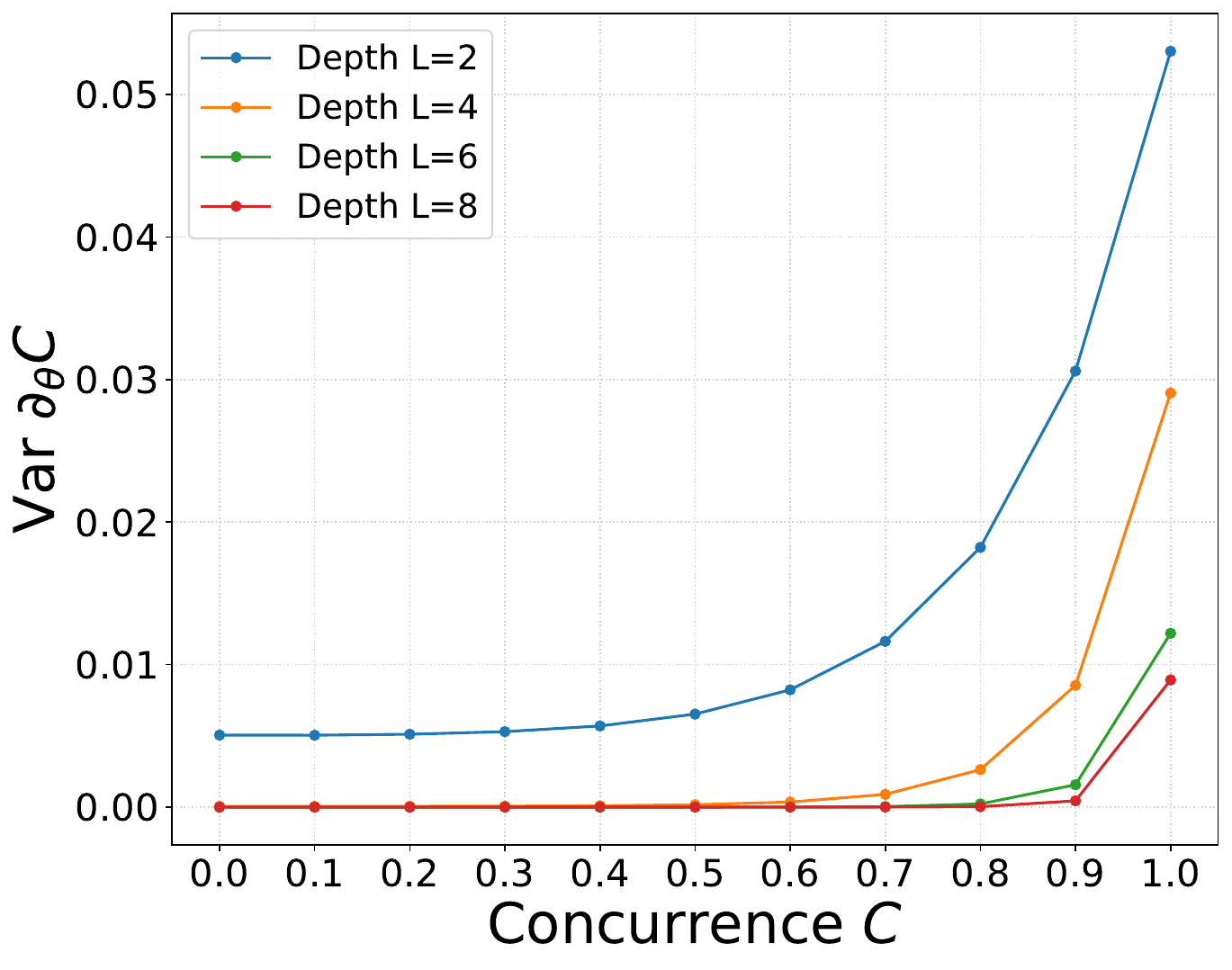}} 
    \caption{Adversarial impact on trainability  for $n=6$ qubit systems. (a) For fixed trainable parameters, Kraus expressibility norm decays with PQCh-ansatz depth and adversarial noise. (b) Adversarial noise in non-local CNOTs (via reduced concurrence) renders even shallow circuits untrainable. (c) At fixed depth, stronger adversarial noise (low concurrence) reduces gradient variance thereby impeding optimization. The plots are averaged over 100 trials.}
    \label{fig:foobar}
    \vspace{-0.6cm}
\end{figure}

\emph{Effect of Depth and Noise on Kraus Expressibility:} 
We first fix the trainable parameters of the PQCh-ansatz and introduce noise in the CNOT layers via the Kraus operators in Proposition~\ref{prop:noisy_cnot_cat}.
Fig.~\ref{subfig:expressibility_vs_depth_plot} shows that the  Kraus expressibility  norm decreases with circuit depth (indicating increasing Kraus expressibility), and this effect is amplified by increasing noise (or lowering concurrence). 
This observation is consistent with Corollary~\ref{corollary:expressibility_limiting_case}: noise reduces output state purity, which in turn lowers Kraus expressibility norm, with compounding effects across layers.

\emph{Gradient Variance and Barren Plateaus:} Prior work \cite{cerezo2021cost} shows that (in the noise-free setting) hardware-efficient ans{\"a}tze (HEA) with local cost function exhibit barren plateau for deep circuits but are trainable for shallow circuits. We observe the same trend in the noise-free scenario in Fig.~\ref{subfig:variance_vs_depth_plot}: the gradient variance rapidly drops to zero with increasing depth, rendering deep circuits untrainable.

Adversarial noise further amplifies this effect.  As the concurrence of the shared entangled state decreases (i.e., noise increases), the gradient variance diminishes even for shallow circuits as shown in Fig.~\ref{subfig:variance_vs_depth_plot}, indicating \textit{noise-induced barren plateaus}. Furthermore, as shown in Fig.~\ref{subfig:variance_vs_concurrence_plot}, for fixed depth, increasing the noise in CNOT gate reduces the gradient variance, making it more difficult to train. This aligns with Theorem~\ref{theorem:trainability_expressibility}: increasing noise impacts both the Kraus expressibility norm and the signal strength via channel attenuation, resulting in vanishing gradients.  

\begin{figure}
    \centering
    \subfigure[]{\label{subfig:variance_cost_vs_num_qubit_ideal}\includegraphics[width=0.32\textwidth]{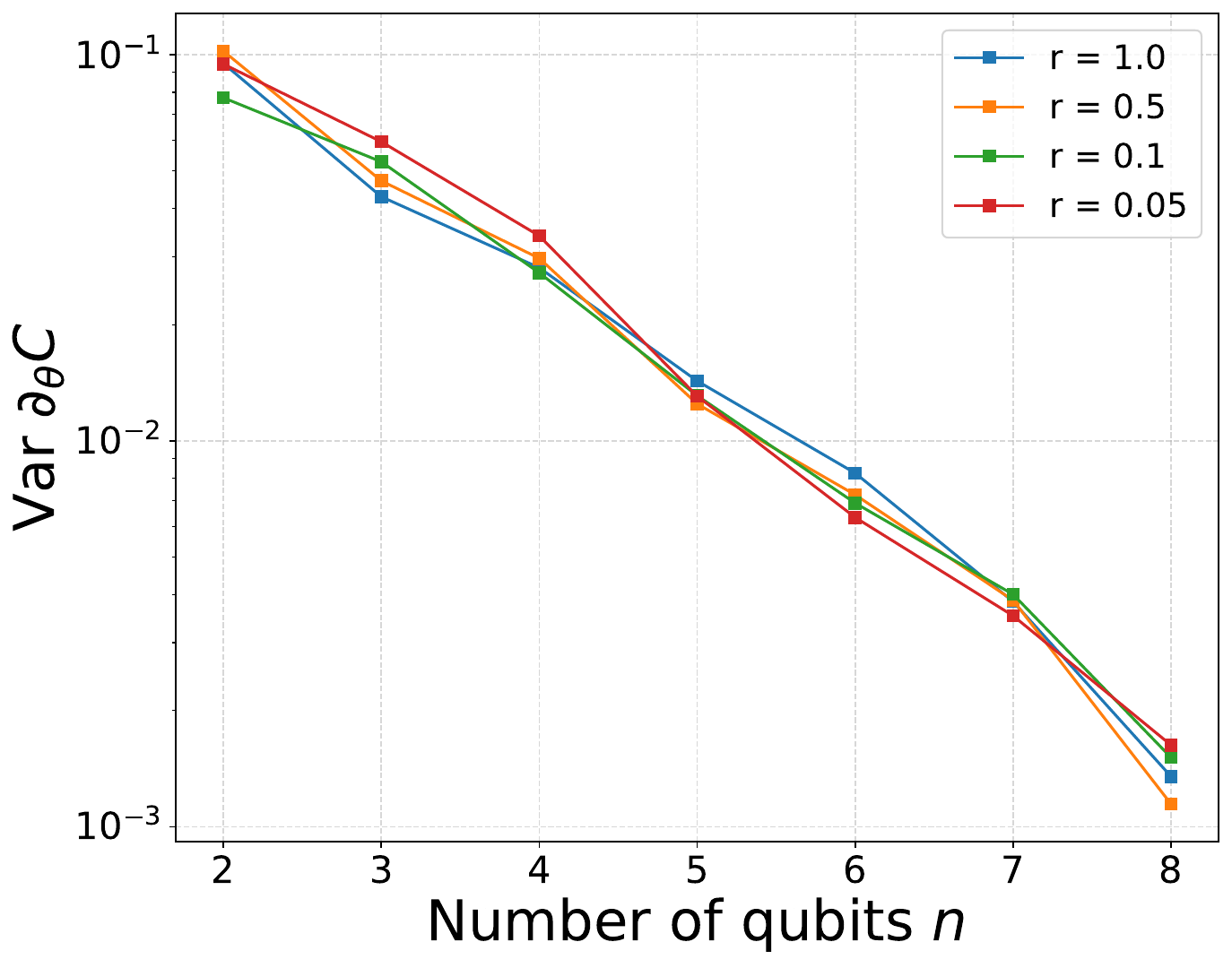}} 
    \subfigure[]
    {\label{subfig:variance_cost_vs_num_qubit_conc_point_nine}\includegraphics[width=0.32\textwidth]{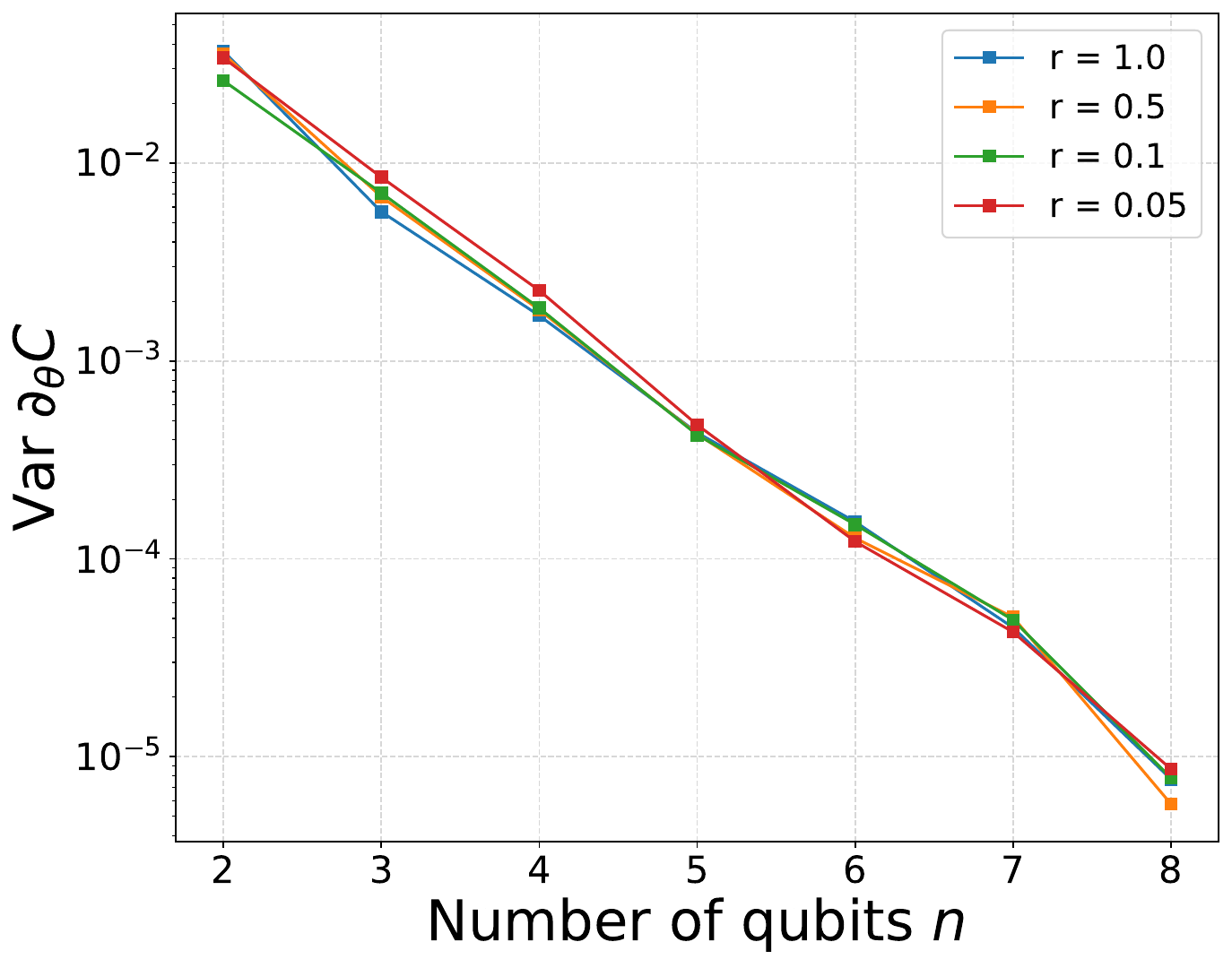}} 
    \subfigure[]
    {\label{subfig:variance_cost_vs_num_qubit_conc_point_eight}\includegraphics[width=0.32\textwidth]{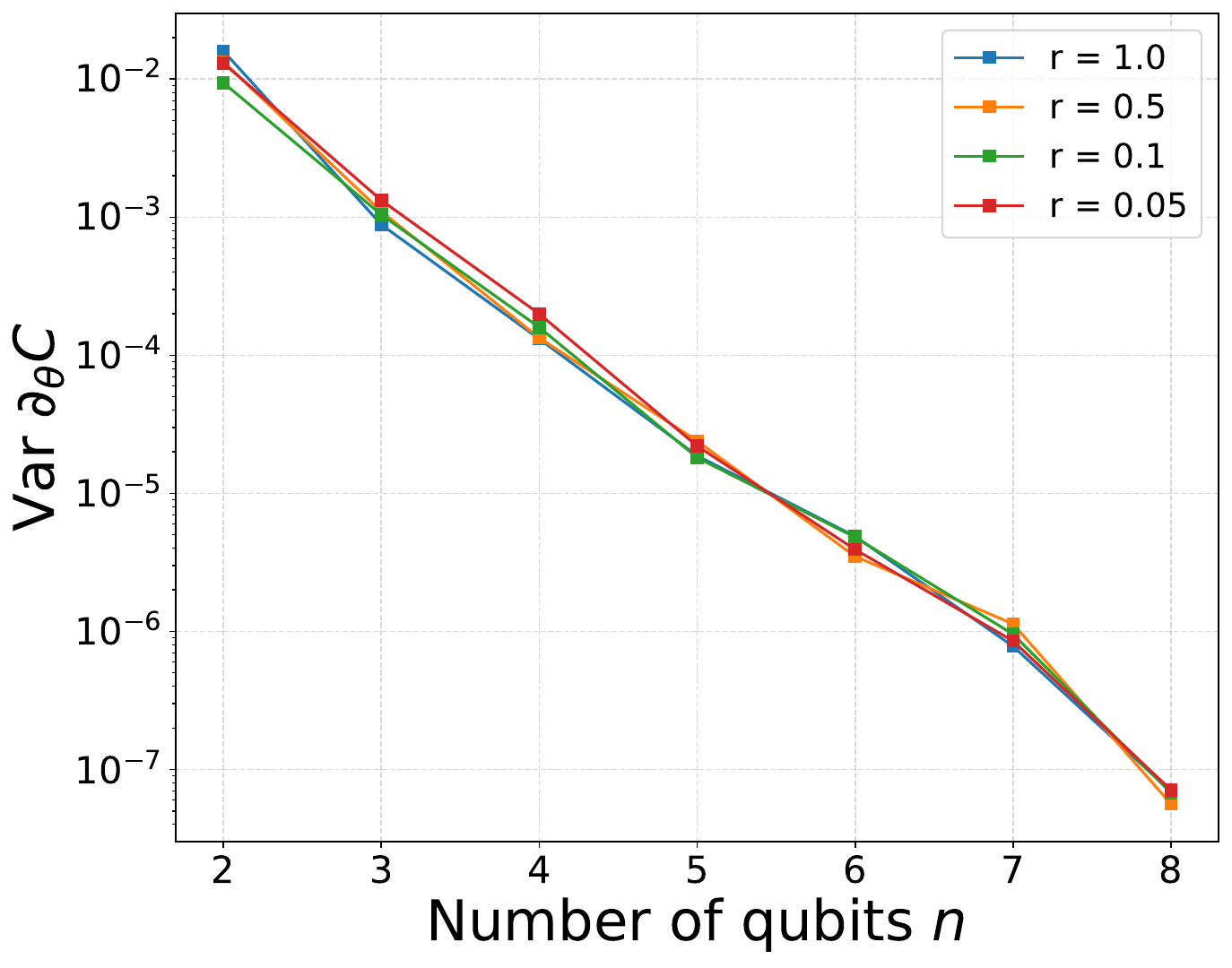}}
    \caption{Scaling of the  cost-function gradient variance with the number of qubits $n$. Rotation  angles are sampled from $[\theta_l^i, \theta_l^i + 2\pi r]$ with $\theta_l^i \sim_R {\rm Unif} [0,2\pi]$.  
    Results are shown for different levels of adversarial noise in non-local CNOT gates: $(a)$ concurrence $=1$, $(b)$ $0.9$ and $(c)$  $0.8$. 
 Gradients are computed with respect to $\theta_1^1$ (first qubit, first layer), plots are averaged over 100 trials  and $L = 10$.}
\label{fig:restricted_initialisation}
\vspace{-0.6cm}
\end{figure}
\emph{Effect of Restricted Parameter Initialization:} Fig.~\ref{fig:restricted_initialisation} studies the impact of restricted parameter initialization by sampling angles from  $[\theta_l^i,\theta_l^i+2\pi r]$, where $r \in (0,1]$. Here, $r=1$ corresponds to uniform exploration of the full parameter space, while $r \ll 1$ confines the ansatz near a fixed initialization point. 
In the noiseless case (Fig.~\ref{subfig:variance_cost_vs_num_qubit_ideal}), unrestricted initialization ($r=1)$ yields high initial variance that decays rapidly with system size $n$, while restricted initialization $(r \ll 1)$ slows this decay.
With increasing noise (Fig.~\ref{subfig:variance_cost_vs_num_qubit_conc_point_nine}-Fig.~\ref{subfig:variance_cost_vs_num_qubit_conc_point_eight}), the gradient variance decreases more rapidly, indicating earlier onset of barren plateaus. At high noise level, the variance collapses across all $r$, rendering the ansatz untrainable even at moderate system sizes.

\vspace{-0.3cm}
\section{Conclusion}
\vspace{-0.2cm}

Distributed implementations of VQAs rely critically on shared entanglement across QPUs, which is typically assumed to be secure. We show that this assumption introduces a fundamental vulnerability. We develop an adversarial framework that maps entanglement-level perturbations to gate-level noise via an explicit Kraus representation. To quantify the impact of such perturbations, we introduce Kraus expressibility, a metric that extends unitary expressibility to noisy and adversarial settings.
Our analysis establishes a principled connection between Kraus expressibility and trainability by bounding the variance of cost-function gradients. This reveals a key insight: 
by balancing purity loss and noise-induced correlations, a stealthy adversary can avoid barren plateaus, by maintaining non-vanishing gradients, while systematically biasing the optimization towards incorrect solutions.
Numerical results corroborate this behavior and show that standard mitigation strategies, such as restricted parameter initialization and local cost functions, can fail under adversarial noise.

\emph{Limitations and future directions:} Our analysis benchmarks Kraus expressibility against Haar-random unitaries. Extending this framework to more general quantum channel ensembles, including open-system dynamics, is an important direction. While we consider controlled adversarial perturbations, evaluating the feasibility of such attacks under realistic hardware constraints (e.g., decoherence, entanglement purification), and exploring whether structured ansätze provide robustness, remain open problems. Our results also highlight the need for adversarially robust training and verification protocols for distributed quantum learning.
\bibliography{references}
\bibliographystyle{unsrt}

\newpage
\appendix
\section{Noisy cat-entangler/cat-disentangler based CNOT gate} \label{app:noisy_cnot_cat}
Let us consider the CNOT implementation via the \textit{cat-entangler/cat-disentangler} protocol discussed in the previous section. In the network scenario, a central agent may prepare the Bell entangled-pair and distribute among the two QPUs that attempt to execute the \textit{nonlocal} CNOT operation. In the adversarial setting, the central agent may be compromised and under the influence of the adversary (say Eve) distribute the state $\ket{\tilde{q}_{AB}} = \sum_{i,j \in \{0,1\}} c_{ij} \ket{ij}$ among the two QPU nodes. Note that to preserve normalisation property, the coefficients of the shared state require to satisfy $\sum_{i,j \in \{0,1\}} |c_{ij}|^2 = 1$. 

With the compromised central node, the state at the start of the \textit{cat-entangler} operation is given by 
\begin{eqnarray}
    |\Psi\rangle_{in} = (|0\rangle_c|\psi_0\rangle + |1\rangle_c|\psi_1\rangle) \otimes (c_{00}|00\rangle + c_{01}|01\rangle + c_{10}|10\rangle + c_{11}|11\rangle)_{ab} \otimes |\psi_2\rangle_t 
\end{eqnarray}
where subscript $c$ and $t$ denotes the control and target qubits respectively and $ab$ denotes the shared state between the two QPUs. {\textcolor{black}{Note that in $|\Psi\rangle_{in}$ the states $\ket{\psi_0}$ and $\ket{\psi_1}$ are orthogonal in the ancilla space such that if the control node selects $\ket{\psi_0}$, it projects the control qubit non-destructively to the state $\ket{0}_c$ and vice-versa.}}  Also, in the limit of $c_{00}$ and $c_{11}$ taking value $1/\sqrt{2}$, and all other coefficients as zero we are back to the adversary-free scenario. Let us next follow the propagation of the state $|\Psi\rangle_{in}$ through the \textit{cat-entangler/cat-disentangler} protocol. Applying the \textit{CNOT} gate with control qubit $q_c$ and target qubit $q_A$ the input state $\ket{\Psi}_{in}$ transforms to
\begin{eqnarray}
&&\ket{\Psi_1} \coloneqq  \text{CNOT}_a^c~ |\Psi\rangle_{in} = |0\rangle_c|\psi_0\rangle (c_{00}|00\rangle + c_{01}|01\rangle + c_{10}|10\rangle + c_{11}|11\rangle)_{ab} \ket{\psi_2}_t \nonumber \\
&& \qquad \qquad \qquad \qquad \qquad+ |1\rangle_c|\psi_1\rangle (c_{00}|10\rangle + c_{01}|11\rangle + c_{10}|00\rangle + c_{11}|01\rangle)_{ab} \ket{\psi_2}_t.
\end{eqnarray}
Next, performing projective measurement on qubit $q_A$ in the computational basis we obtain the conditional states as 
\begin{eqnarray}
    \ket{\Psi_2} \coloneqq 
    \begin{cases}
        |0\rangle_c|\psi_0\rangle (c_{00}|0\rangle_b + c_{01}|1\rangle_b) \ket{\psi_2}_t + |1\rangle_c|\psi_1\rangle (c_{10}|0\rangle_b + c_{11}|1\rangle_b) \ket{\psi_2}_t, & \text{for outcome 0} \\
        |0\rangle_c|\psi_0\rangle (c_{11}|1\rangle_b + c_{10}|0\rangle_b) \ket{\psi_2}_t + |1\rangle_c|\psi_1\rangle (c_{01}|1\rangle_b + c_{00}|0\rangle_b) \ket{\psi_2}_t, & \text{for outcome 1}.
    \end{cases}
\end{eqnarray}
Following this, if we apply $\sigma_x$ on qubit $q_B$ when the measurement outcome is `1' we obtain the conditional state  
\begin{eqnarray}
    \ket{\Psi_3} \coloneqq 
    \begin{cases}
        |0\rangle_c|\psi_0\rangle (c_{00}|0\rangle_b + c_{01}|1\rangle_b) \ket{\psi_2}_t + |1\rangle_c|\psi_1\rangle (c_{10}|0\rangle_b + c_{11}|1\rangle_b) \ket{\psi_2}_t, & \text{for outcome 0} \\
        |0\rangle_c|\psi_0\rangle (c_{11}|0\rangle_b + c_{10}|1\rangle_b) \ket{\psi_2}_t + |1\rangle_c|\psi_1\rangle (c_{01}|0\rangle_b + c_{00}|1\rangle_b) \ket{\psi_2}_t, & \text{for outcome 1}.
    \end{cases}
\end{eqnarray}
Note here that unlike the adversary-free scenario, applying the gate sequence does not yield a deterministically ideal state for both outcomes due to the off-diagonal terms and a mismatch in the amplitudes. 

To track the error propagation, let us consider at the conditional state for the outcome `0'. Applying the CNOT gate with control qubit $q_B$ and target qubit $q_t$ and simplifying the state transforms to
\begin{eqnarray}
    \ket{\Psi_4} \coloneqq |0\rangle_b \bigg[ c_{00} |0\rangle_c|\psi_0\rangle  |\psi_2\rangle_t + c_{10} |1\rangle_c|\psi_1\rangle |\psi_2\rangle_t \bigg] + |1\rangle_b \bigg[ c_{01} |0\rangle_c|\psi_0\rangle |\bar{\psi}_2\rangle_t + c_{11} |1\rangle_c|\psi_1\rangle |\bar{\psi}_2\rangle_t \bigg].
\end{eqnarray}
Next, performing a \textit{Hadamard} operation on qubit $q_B$ and simplifying we obtain
\begin{eqnarray}
    && \ket{\Psi_5} \coloneqq \frac{1}{\sqrt{2}}|0\rangle_b \bigg[ |0\rangle_c|\psi_0\rangle(c_{00} |\psi_2\rangle_t + c_{01} |\bar{\psi}_2\rangle_t) +   
    |1\rangle_c|\psi_1\rangle(c_{10}|\psi_2\rangle_t + c_{11}|\bar{\psi}_2\rangle_t) \bigg]  \nonumber \\
    && \qquad \quad + \frac{1}{\sqrt{2}}|1\rangle_b \bigg[ |0\rangle_c|\psi_0\rangle(c_{00}|\psi_2\rangle_t - c_{01}|\bar{\psi}_2\rangle_t) +  
    |1\rangle_c|\psi_1\rangle(c_{10}|\psi_2\rangle_t - c_{11}|\bar{\psi}_2\rangle_t) \bigg].
\end{eqnarray}
Now, if we perform a projective measurement on qubit $q_B$ in the computational basis, {up to a global factor} the state $\Psi_5$ collapses to
\begin{eqnarray}
    \ket{\Psi_6} \coloneqq 
    \begin{cases}
        |0\rangle_c|\psi_0\rangle(c_{00}|\psi_2\rangle_t + c_{01}|\bar{\psi}_2\rangle_t) + |1\rangle_c|\psi_1\rangle(c_{10}|\psi_2\rangle_t + c_{11}|\bar{\psi}_2\rangle_t), & \text{for outcome 0} \\
        |0\rangle_c|\psi_0\rangle(c_{00}|\psi_2\rangle_t - c_{01}|\bar{\psi}_2\rangle_t) + |1\rangle_c|\psi_1\rangle(c_{10}|\psi_2\rangle_t - c_{11}|\bar{\psi}_2\rangle_t), & \text{for outcome 1}.
    \end{cases}
\end{eqnarray}
Following this, if we apply $\sigma_z$ on qubit $q_c$ when the measurement outcome is `1', {up to a global factor} we obtain the conditional state
\begin{eqnarray}
    \ket{\Psi_7} \coloneqq 
    \begin{cases}
        |0\rangle_c|\psi_0\rangle \bigg(c_{00}|\psi_2\rangle_t + c_{01}|\bar{\psi}_2\rangle_t\bigg) + |1\rangle_c|\psi_1\rangle \bigg(c_{10}|\psi_2\rangle_t + c_{11}|\bar{\psi}_2\rangle_t \bigg), & \text{for outcome 0} \\
        |0\rangle_c|\psi_0\rangle \bigg(c_{00}|\psi_2\rangle_t - c_{01}|\bar{\psi}_2\rangle_t \bigg) -   
        |1\rangle_c|\psi_1\rangle \bigg(c_{10}|\psi_2\rangle_t - c_{11}|\bar{\psi}_2\rangle_t \bigg) , & \text{for outcome 1}.
    \end{cases}
\end{eqnarray}
Note in $\ket{\Psi_7}$, the presence of non-zero $c_{01}$ and $c_{10}$ terms alongside imbalances between $c_{00}$ and $c_{11}$ prevents perfect disentanglement  thereby skewing the state. 

Note a similar treatment follows for the conditional state for the outcome `1' in $\ket{\Psi_3}$. The adversarial effect tweaking with the shared entangled state between $q_A$ and $q_B$ therefore creates a \textit{noisy} CNOT operation with the set of Kraus operators given by $\mathcal{K} \coloneqq \{ K_{ij}\}_{i,j \in \{0,1\}}$. The elements of $\mathcal{K}$ are
\begin{eqnarray}
    && E_{00} = \frac{1}{\sqrt{2}} \op{0}_c \otimes (c_{00} \mathbbm{1} + c_{01} \sigma_x)_t + \frac{1}{\sqrt{2}} \op{1}_c \otimes (c_{10} \mathbbm{1} + c_{11} \sigma_x)_t, \\
    && E_{01} = \frac{1}{\sqrt{2}} \op{0}_c \otimes (c_{00} \mathbbm{1} - c_{01} \sigma_x)_t - \frac{1}{\sqrt{2}} \op{1}_c \otimes (c_{10} \mathbbm{1} - c_{11} \sigma_x)_t, \\
    && E_{10} = \frac{1}{\sqrt{2}} \op{0}_c \otimes (c_{11} \mathbbm{1} + {c_{10}} \sigma_x)_t + \frac{1}{\sqrt{2}} \op{1}_c \otimes (c_{01} \mathbbm{1} + c_{00} \sigma_x)_t, \\
    && E_{11} = \frac{1}{\sqrt{2}} \op{0}_c \otimes (c_{11} \mathbbm{1} - {c_{10}} \sigma_x)_t - \frac{1}{\sqrt{2}} \op{1}_c \otimes (c_{01} \mathbbm{1} - c_{00} \sigma_x)_t.
\end{eqnarray}
Note that in the adversary-free scenario we have $c_{00} = 1/\sqrt{2} = c_{11}$ and $c_{01} = 0 = c_{10}$. In such a scenario all the Kraus operators reduce to the CNOT gate as $\frac{1}{2} \op{0}_c \otimes \mathbbm{1}_t + \frac{1}{2} \op{1}_c \otimes (\sigma_x)_t$.

\section{Generalized Expressivity of Quantum CPTP Circuits and Connection to Trainability} \label{app:quantum_channel_expressivity}

\subsection{Expressivity with Kraus operators} \label{app:Kraus_expressivity}

The Kraus expressibility of the PQCh-ansatz which quantifies the ability of noisy quantum circuits to explore the space of ideal unitary transformations can be defined using the following superoperator:
\begin{eqnarray}
    \mathcal{A}_{\mathcal{T}}^t (\cdot) \coloneqq \int_{\mathcal{U}(d)} d\mu (V) V^{\otimes t} (\cdot) (V^\dag)^{\otimes t} 
    - \int_{\mathcal{T}} d\nu(E) \sum_{k_1...k_l} (\otimes_{j=1}^t E_{k}^{(j)}) ~ (\cdot) ~ (\otimes_{j=1}^t E_{k}^{(j)\dag}),\label{eq:expressivity_operator}
\end{eqnarray}
where $d\mu(V)$ is the volume element of the Haar measure over the unitary group $\mathcal{U}(d)$ with $d=2^n$, $d\nu(E)$ is the volume element corresponding to uniform distribution over the PQCh-ensemble $\mathcal{T}$, {and $E_{k}^{(j)}$ is the k-th Kraus operator of the PQCh-ansatz for the j-th copy.} We start by analyzing the first term of Eq.~\eqref{eq:expressivity_operator} as
\begin{eqnarray}
     \int_{\mathcal{U}(d)} d\mu (V) V^{\otimes t} (\cdot) (V^\dag)^{\otimes t}:=\mathcal{G}^t_\mathcal{U} (\cdot).
\end{eqnarray} Then, for any input operator $\rho \coloneqq \rho^{\otimes t}$, the trace of the above quantity evaluates as
\begin{eqnarray}
    \Tr \mathcal{G}^t_\mathcal{U} (\rho) &&= \int_{\mathcal{U}(d)} d\mu  (V) ~\Tr [ V^{\otimes t} (\rho) (V^\dag)^{\otimes t} ] \nonumber \\
    && = \int_{\mathcal{U}(d)} d\mu  (V) ~\Tr \rho  ~ \text{(cyclic property of trace)} \nonumber \\
    && = \Tr \rho.
\end{eqnarray} Additionally, the invariance property of the Haar measure yields that for any integrable function $f(\cdot)$ and any unitary operator $V\in \mathcal{U}(d)$, the following relationship holds:
\begin{eqnarray}
    \int_U d\mu(U) f(U) = \int_U d\mu(U) f(UV) = \int_U d\mu(U) f(VU). \nonumber  
\end{eqnarray}
Using the above property, we then get that
\begin{eqnarray}
    \mathcal{G}^t_\mathcal{U} (\rho) &&= \int_{\mathcal{U}(d)} d\mu (V) V^{\otimes t} \rho (V^\dag)^{\otimes t} \\
    &&= \int_{\mathcal{U}(d)} d\mu (V) (WV)^{\otimes t} \rho ((WV)^\dag)^{\otimes t} \\
    &&= W^{\otimes t} \int_{\mathcal{U}(d)} d\mu (V) V^{\otimes t} \rho (V^\dag)^{\otimes t} (W^\dag)^{\otimes t} \\
    &&= W^{\otimes t} \mathcal{G}^t_\mathcal{U} (\rho) (W^\dag)^{\otimes t}, 
\end{eqnarray} 
thus for any unitary $W \in \mathcal{U}(d)$. 
This, in turn, implies that $(W^\dag)^{\otimes t} \mathcal{G}^t_\mathcal{U} (\rho) = \mathcal{G}^t_\mathcal{U} (\rho) (W^\dag)^{\otimes t}$, whereby we have the commutator relation $[ (W^\dag)^{\otimes t}, \mathcal{G}^t_\mathcal{U} (\rho)] = 0$, or alternatively, $[ \mathcal{G}^t_\mathcal{U} (\rho), W^{\otimes t} ] = 0$. 

{\color{black} In the following, we aim to characterize $\mathcal{G}^t_\mathcal{U}(\rho)$ such that the commutator relation $[ (W^\dag)^{\otimes t}, \mathcal{G}^t_\mathcal{U} (\rho)] = 0$ holds for any unitary $W$}. 
{\color{black} \begin{lemma} \label{lem:operator_Haar} The operator $\mathcal{G}^t_\mathcal{U}(\rho)$ takes the following form: \begin{align}
    &\mathcal{G}^t_\mathcal{U} (\rho) =  \alpha ~ \mathbbm{1} + \beta ~ \text{SWAP}, \quad \mbox{where}\\& \alpha=\frac{d \Tr \rho - \Tr [\rho ~ \text{SWAP}]}{d(d^2 - 1)}, \nonumber \\
    &\beta = \frac{d \Tr[\rho~ \text{SWAP}] - \Tr \rho}{d(d^2 - 1)}.
\end{align} \end{lemma}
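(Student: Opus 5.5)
The plan is to combine the commutation relation just derived, $[\mathcal{G}^2_\mathcal{U}(\rho), W^{\otimes 2}]=0$ for all $W\in\mathcal{U}(d)$, with Schur--Weyl duality (here $t=2$, consistent with the two-copy operators $\mathbbm{1}$ and $\text{SWAP}$ in the statement). For $t=2$, the commutant of $\{W^{\otimes 2}: W\in\mathcal{U}(d)\}$ in $\mathrm{End}(\mathbb{C}^d\otimes\mathbb{C}^d)$ is the span of the permutation operators of $S_2$, namely $\mathbbm{1}$ and $\text{SWAP}$. I would justify this as follows. $\mathbb{C}^d\otimes\mathbb{C}^d$ decomposes into the symmetric and antisymmetric subspaces, with projectors $\Pi_\pm=(\mathbbm{1}\pm\text{SWAP})/2$. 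These carry inequivalent irreducible representations of $\mathcal{U}(d)$ for $d\ge 2$. Schur's lemma then forces any commuting operator to be of the form $a\Pi_+ + b\Pi_-$, which is equivalently $\alpha\mathbbm{1}+\beta\,\text{SWAP}$. This establishes the claimed form of $\mathcal{G}^2_\mathcal{U}(\rho)$.

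Next I fix $\alpha,\beta$ by taking two independent linear functionals of both sides. The first is the trace: the earlier computation gives $\Tr\mathcal{G}^2_\mathcal{U}(\rho)=\Tr\rho$, while $\Tr\mathbbm{1}=d^2$ and $\Tr\,\text{SWAP}=d$. The second is the trace against $\text{SWAP}$. Since $\text{SWAP}$ commutes with $V^{\otimes 2}$, cyclicity gives $\Tr[\mathcal{G}^2_\mathcal{U}(\rho)\,\text{SWAP}]=\Tr[\rho\,\text{SWAP}]$, and $\text{SWAP}^2=\mathbbm{1}$ gives $\Tr[\mathbbm{1}\,\text{SWAP}]=d$ and $\Tr[\text{SWAP}\,\text{SWAP}]=d^2$. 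This yields the linear system
\begin{align}
d^2\alpha + d\beta = \Tr\rho, \qquad d\alpha + d^2\beta = \Tr[\rho\,\text{SWAP}].
\end{align}

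The system has determinant $d^4-d^2=d^2(d^2-1)$, which is nonzero for $d\ge 2$. Cramer's rule gives $\alpha=(d^2\Tr\rho-d\Tr[\rho\,\text{SWAP}])/(d^2(d^2-1))$, which simplifies to the stated expression, and similarly for $\beta$. As a consistency check against Theorem~\ref{theorem:kraus_expressibility}, setting $\rho\to\rho^{\otimes 2}$ gives $\Tr\rho^{\otimes2}=1$ and $\Tr[\rho^{\otimes 2}\text{SWAP}]=\Tr\rho^2$, recovering the $\alpha,\beta$ defined there.

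The only genuinely nontrivial step is the commutant characterization, i.e.\ the application of Schur--Weyl duality. I would cite it as standard. If a self-contained argument is desired, I would do the following. First, apply $W=\mathrm{diag}(e^{i\phi_1},\dots,e^{i\phi_d})$ to show that matrix elements $\langle ab|X|cd\rangle$ vanish unless $\{a,b\}=\{c,d\}$ as multisets. Second, apply permutation matrices to show that the surviving coefficients are independent of the particular indices. Third, apply a generic rotation in a two-dimensional subspace to tie the diagonal and off-diagonal coefficients together, leaving exactly the span of $\mathbbm{1}$ and $\text{SWAP}$.
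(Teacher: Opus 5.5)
Your proposal is correct, and it differs from the paper's proof in how the commutant is characterized. The second half, fixing $\alpha,\beta$ by pairing both sides with $\mathbbm{1}$ and $\text{SWAP}$ and solving the $2\times 2$ system, is the paper's argument. You add two small things the paper leaves out: a justification that $\text{SWAP}$ commutes with $V^{\otimes 2}$ (the paper simply assumes it), and the check that the determinant $d^2(d^2-1)$ is nonzero.

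\begin{itemize}
\item \textbf{The paper's route.} It works infinitesimally. Writing $W=e^{i\epsilon H}$ and differentiating in $\epsilon$, it reduces $[X,W^{\otimes 2}]=0$ to $[X,H\otimes\mathbbm{1}+\mathbbm{1}\otimes H]=0$. It then computes matrix elements against two kinds of generators. The diagonal generators $|a\rangle\langle a|$ force $\langle pr|X|qs\rangle=0$ unless $\{p,r\}=\{q,s\}$. The imaginary off-diagonal generators $i(|a\rangle\langle b|-|b\rangle\langle a|)$ force the surviving coefficients to be index-independent.
\item \textbf{Your route.} You apply Schur's lemma to the multiplicity-free decomposition $\mathbb{C}^d\otimes\mathbb{C}^d=\mathrm{Sym}^2\oplus\Lambda^2$ into inequivalent irreducibles. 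Your group-level fallback uses diagonal phases, permutations and one two-dimensional rotation. It does the job of the paper's generator computation with finitely many concrete unitaries.
\end{itemize}

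What your route buys is a cleaner logical structure. It shows directly that every operator commuting with all $W^{\otimes 2}$ lies in $\mathrm{span}\{\mathbbm{1},\text{SWAP}\}$, which is the direction the lemma needs. The paper instead treats the identity-type and SWAP-type index patterns as separate cases, and phrases its conclusion only as the existence of such an operator. Your rotation step is exactly what handles the $|aa\rangle\langle aa|$ entries where the two patterns overlap.

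Your route also makes the extension to general $t$ transparent. The commutant becomes the span of the $S_t$ permutation operators, and the $2\times 2$ system becomes their Gram matrix. The paper's route, in exchange, is self-contained elementary matrix algebra with no appeal to representation theory.
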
}
Detailed derivation can be found in Appendix~\ref{app:lemma1}.

Next, let us look at the second term of the superoperator $\mathcal{A}_{\mathcal{T}}^t (\rho)$ in \eqref{eq:expressivity_operator}: 
\begin{eqnarray}
     \mathcal{G}^t_{\mathcal{T}} (\rho^{\otimes t}) = \int_{\mathcal{T}} d\nu(E) \sum_{k_1...k_l} (\otimes_{j=1}^t E_{k}^{(j)}) ~ {\color{black}\rho^{\otimes t}} ~ (\otimes_{j=1}^t E_{k}^{(j)\dag}). 
 \end{eqnarray}
Note that when writing the above evolution the ordering of the Kraus evolution should be kept in mind. 

Subsequently, the expressibility superoperator (in the general noisy scenario) can be written as 
\begin{align}
    \mathcal{A}_{\mathcal{T}}^t (\rho^{\otimes t}) &= {\color{black}\mathcal{G}^t_\mathcal{U}(\rho^{\otimes t})- \mathcal{G}^t_{\mathcal{T}} (\rho^{\otimes t})} \\ &=\color{black}{\alpha ~ \mathbbm{1} + \beta ~ \text{SWAP}} - \color{black}{\int_{\mathcal{T}} d\nu(E) \sum_{k_1...k_l} (\otimes_{j=1}^t E_{k}^{(j)}) ~ \rho^{\otimes t} ~ (\otimes_{j=1}^t E_{k}^{(j)\dag})} \\
    & = \color{black} {\mathcal{A}^{(t)}} - \color{black}{\mathcal{M}^{(t)}}. \label{eq:superoperator_difference}
\end{align}
The following theorem then characterizes the squared-norm of the expressibility superoperator for $t=2$.
\begin{theorem} 
The  Kraus expressibility norm of the PQCh-ansatz with respect to the state $\rho$ is given by   
    \begin{eqnarray}
        || \mathcal{A}_{\mathcal{T}}^2 (\rho^{\otimes 2}) ||_2^2 = (\alpha^2 + \beta^2) ~ d^2 + 2 \alpha \beta d - 2 [\alpha + \beta \bar{\nu}] + \mathcal{N}_{\text{noise}},
    \end{eqnarray} 
    where {$\alpha = \frac{d - \Tr \rho^2}{d(d^2 - 1)}$, $\beta = \frac{d \Tr \rho^2 - 1}{d(d^2 - 1)}$, $\bar{\nu}$} is the ensemble averaged purity of the state $\rho$ after evolving through the PQCh-ansatz, and $\mathcal{N}_{\text{noise}}$ is the noise fluctuation term defined as
    \begin{eqnarray}
        \mathcal{N}_{\text{noise}} = \int_{\mathcal{T}}\int_{\mathcal{T}}d\tau(E)d\nu(F)\left[\sum_{j,k} \Tr(E_{k}\rho E_{k}^{\dagger}F_{j}\rho F_{j}^{\dagger})\right]^2
    \end{eqnarray} 
    with $\{E_k\}$ and $\{F_k\}$ representing Kraus operators corresponding to independent channel realizations sampled from 
    $\mathcal{T}$. 
\end{theorem}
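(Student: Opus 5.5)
We plan to expand the Hilbert--Schmidt norm of the decomposition in Eq.~\eqref{eq:superoperator_difference}, $\mathcal{A}_{\mathcal{T}}^2(\rho^{\otimes 2}) = \mathcal{A}^{(2)} - \mathcal{M}^{(2)}$, and compute the three resulting pieces separately:
\begin{align}
\Vert \mathcal{A}^{(2)} - \mathcal{M}^{(2)} \Vert_2^2 = \Tr[(\mathcal{A}^{(2)})^2] - 2\,\Tr[\mathcal{A}^{(2)} \mathcal{M}^{(2)}] + \Tr[(\mathcal{M}^{(2)})^2].
\end{align}
Both operators are Hermitian, since $\rho^{\otimes 2}$ is Hermitian and both maps are CP. So the cross terms combine into a single real trace.

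\textbf{Step 1 (Haar term).} Lemma~\ref{lem:operator_Haar} gives $\mathcal{A}^{(2)} = \alpha \mathbbm{1} + \beta\,\mathrm{SWAP}$. For the input $\rho^{\otimes 2}$ we have $\Tr\rho^{\otimes 2} = 1$ and $\Tr[\rho^{\otimes 2}\mathrm{SWAP}] = \Tr\rho^2$, which yields the stated $\alpha,\beta$. We then use $\Tr\mathbbm{1} = d^2$, $\Tr\,\mathrm{SWAP} = d$ and $\mathrm{SWAP}^2 = \mathbbm{1}$. This gives $\Tr[(\mathcal{A}^{(2)})^2] = (\alpha^2+\beta^2)d^2 + 2\alpha\beta d$.

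\textbf{Step 2 (cross term).} The key observation is that the ansatz acts identically on both copies. The sum over Kraus indices $(k_1,k_2)$ with $E_{k_1}\otimes E_{k_2}$ therefore factorizes, and $\mathcal{M}^{(2)} = \int_{\mathcal{T}} d\nu(\mathcal{E})\, \mathcal{E}(\rho)\otimes\mathcal{E}(\rho)$. Here I read the paper's notation $\sum_{k}\otimes_j E^{(j)}_k$ as independent sums per copy, so that $\mathcal{M}^{(2)}$ is the ensemble average of $\mathcal{E}^{\otimes 2}(\rho^{\otimes 2})$. Then $\Tr[\mathbbm{1}\,\mathcal{M}^{(2)}] = \int d\nu\,(\Tr\mathcal{E}(\rho))^2 = 1$ by trace preservation. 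Also $\Tr[\mathrm{SWAP}\,\mathcal{M}^{(2)}] = \int d\nu\,\Tr[\mathcal{E}(\rho)^2] = \bar{\nu}$, which defines the averaged output purity. Hence $-2\Tr[\mathcal{A}^{(2)}\mathcal{M}^{(2)}] = -2[\alpha + \beta\bar\nu]$.

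\textbf{Step 3 (noise term).} Writing $\mathcal{M}^{(2)}$ as an average over two independent realizations $\mathcal{E},\mathcal{F}\sim\mathcal{T}$, we get
\begin{align}
\Tr[(\mathcal{M}^{(2)})^2] = \int\!\!\int d\nu(\mathcal{E})\,d\nu(\mathcal{F})\,\big(\Tr[\mathcal{E}(\rho)\mathcal{F}(\rho)]\big)^2 .
\end{align}
Expanding $\mathcal{E}(\rho) = \sum_k E_k\rho E_k^\dagger$ and $\mathcal{F}(\rho) = \sum_j F_j\rho F_j^\dagger$ gives exactly $\mathcal{N}_{\text{noise}}$. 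Summing Steps 1--3 yields the claim.

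\textbf{Main obstacle.} The delicate point is justifying the factorization used in Step 2. One must show that the multi-index Kraus sum in the definition of $\mathcal{A}^t_{\mathcal{T}}$, with the layer ordering respected, equals $\mathcal{E}^{\otimes 2}$ for a single sampled channel $\mathcal{E}$, rather than a correlated single-index sum $\sum_k E_k\otimes E_k$. I would first fix a realization and expand the layered composition, checking that the Kraus operators of $\mathcal{E}^{\otimes 2}$ are products $E_{k_1}\otimes E_{k_2}$ with independent multi-indices. Everything else is a routine trace computation.
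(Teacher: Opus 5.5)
Your proposal is correct and follows the paper's proof essentially step for step. It uses the same three-term expansion of $\Vert\mathcal{A}^{(2)}-\mathcal{M}^{(2)}\Vert_2^2$, the same identities ($\Tr\,\mathrm{SWAP}=d$, $\mathrm{SWAP}^2=\mathbbm{1}$, trace preservation, $\Tr[\mathrm{SWAP}\,(X\otimes Y)]=\Tr[XY]$), and the same factorization of $\Tr[(\mathcal{M}^{(2)})^2]$ over two independent realizations; the paper also settles your flagged obstacle the way you do, by summing over independent indices $k_1,k_2$ on the two copies so that $\mathcal{M}^{(2)}=\int d\nu(\mathcal{E})\,\mathcal{E}(\rho)\otimes\mathcal{E}(\rho)$.
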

\begin{proof}
We have
\begin{eqnarray*}
     || \mathcal{A}_{\mathcal{T}}^2 (\rho^{\otimes 2}) ||_2^2 && = \Tr [\mathcal{A}_{\mathcal{T}}^2 (\rho^{\otimes 2})^{\dag} ~ \mathcal{A}_{\mathcal{T}}^2 (\rho^{\otimes 2}) ] \\
    && = \Tr [(\mathcal{A}^{(2)} - \mathcal{M}^{(2)})^\dag (\mathcal{A}^{(2)} - \mathcal{M}^{(2)})] \\
    && = \Tr [\mathcal{A}^{(2)2}] + \Tr [\mathcal{M}^{(2) \dag} \mathcal{M}^{(2)}] - 2 \Re \Tr [\mathcal{A}^{(2)} M^{(2)}] 
\end{eqnarray*}
where we have used the fact that $\Tr[X] + \Tr[X^\dag] = 2 \Tr[X]$ assuming $X$ is hermitian. We next calculate each of the individual terms. We have
\begin{eqnarray*}
    \Tr [\mathcal{A}^{(2)2}] && = \Tr [(\alpha \mathbbm{1} + \beta~ \text{SWAP})(\alpha \mathbbm{1} + \beta~ \text{SWAP})] \\
    && = \alpha^2 \Tr \mathbbm{1} + \beta^2 \Tr \text{SWAP}^2 + 2 \alpha \beta \Tr\text{SWAP} \nonumber \\ 
    && = (\alpha^2 + \beta^2) ~ d^2 + 2 \alpha \beta d
\end{eqnarray*}
where we have used the fact that $\text{SWAP}^2 = \mathbbm{1}$. Next, we note that
\begin{eqnarray*}
    \Tr[\mathcal{M}^{(2)}] && = \Tr \int_{\mathcal{T}} d\nu(E) \sum_{k_1,k_2} (E_{k_1} \otimes E_{k_2}) ~ \rho^{\otimes 2} ~ (E_{k_1} \otimes E_{k_2}^\dag) \nonumber \\ 
    && = \Tr \int_{\mathcal{T}} d\nu(E) \sum_{k_1,k_2} (E_{k_1} \rho E_{k_1}^\dag) \otimes (E_{k_2} \rho E_{k_2}^\dag) \nonumber \\ 
    && = \int_{\mathcal{T}} d\nu(E) \sum_{k_1,k_2} \Tr  [(E_{k_1} \rho E_{k_1}^\dag) \otimes (E_{k_2} \rho E_{k_2}^\dag)] \nonumber \\ 
    && := \xi^{(\rho)}.
\end{eqnarray*}
Note that if $\rho$ is a density matrix, then after the Kraus evolution it remains a valid density matrix and the summation goes to identity and we have $\Tr \mathcal{M}^{(2)} = 1$. Furthermore, we have
\begin{eqnarray}
    \Tr[\mathcal{AM}] && = \Tr[(\alpha \mathbbm{1} + \beta~ \text{SWAP}) \mathcal{M}^{(2)}] \\
    && = \alpha \Tr[\mathcal{M}^{(2)}] + \beta \Tr[\text{SWAP}~\mathcal{M}^{(2)}],
\end{eqnarray}
where
\begin{eqnarray}
    \Tr [\text{SWAP} \mathcal{M}^{(2)}]  &&= \Tr \int_{\mathcal{T}} d\nu(E) ~ \text{SWAP} \sum_{k_1,k_2} (E_{k_1} \rho E_{k_1}^\dag) \otimes (E_{k_2} \rho  E_{k_2}^\dag) \nonumber \\
    && = \int_{\mathcal{T}} d\nu(E) \sum_{k_1,k_2} \Tr ~ \text{SWAP} ~ (E_{k_1} \rho E_{k_1}^\dag) \otimes (E_{k_2} \rho E_{k_2}^\dag) \nonumber \\
    && = \int_{\mathcal{T}} d\nu(E) \sum_{k_1,k_2} \Tr ~ [(E_{k_1} \rho E_{k_1}^\dag) \times (E_{k_2} \rho E_{k_2}^\dag)].
\end{eqnarray} Here, the last equality uses the fact that for two states $\rho_1$ and $\rho_2$,  we have $\Tr[\text{SWAP}~(\rho_1 \otimes \rho_2)] = \Tr[\rho_1 \rho_2]$. 

Next, noting that $\mathcal{M}^{(2)}$ is Hermitian, we have
\begin{eqnarray*}
    && \Tr[\mathcal{M}^{(2)2}] \nonumber \\
    && = \Tr[\mathcal{M}^{(2)\dag} \mathcal{M}^{(2)}] \nonumber \\
    && = \Tr \int_{\mathcal{T}} d\tau(E) \sum_{k_1,k_2} (E_{k_1} \otimes E_{k_2}) ~ \rho^{\otimes 2} ~ (E_{k_1}^\dag \otimes E_{k_2}^\dag) \int_{\mathcal{T}} d\nu(F) \sum_{j_1,j_2} (F_{j_1} \otimes F_{j_2}) ~ \rho^{\otimes 2} ~ (F_{j_1}^\dag \otimes F_{j_2}^\dag) \\
    && = \int_{\mathcal{T}} \int_{\mathcal{T}} d\tau(E) d\nu(F) \sum_{j_1,j_2} \sum_{k_1,k_2} \Tr ~ [(E_{k_1} \rho E_{k_1}^\dag \otimes E_{k_2} \rho E_{k_2}^\dag) ~ (F_{j_1} \rho F_{j_1}^\dag \otimes F_{j_2} \rho F_{j_2}^\dag)] \\
    && = \int_{\mathcal{T}} \int_{\mathcal{T}} d\tau(E) d\nu(F) \sum_{j_1,j_2} \sum_{k_1,k_2} \Tr ~ [E_{k_1} \rho E_{k_1}^\dag  F_{j_1} \rho F_{j_1}^\dag ~ \otimes ~ E_{k_2} \rho E_{k_2}^\dag F_{j_1} \rho F_{j_1}^\dag ] \\
    && = \int_{\mathcal{T}} \int_{\mathcal{T}} d\tau(E) d\nu(F) \sum_{j_1,k_1} \Tr ~ [E_{k_1} \rho E_{k_1}^\dag  F_{j_1} \rho F_{j_1}^\dag] ~ \times ~ \sum_{j_2,k_2} \Tr [E_{k_2} \rho E_{k_2}^\dag F_{j_1} \rho F_{j_1}^\dag ].
\end{eqnarray*}
Now, since the set of operators for $j_1$ and $j_2$,  and similarly for $k_1$ and $k_2$, are the same, we can simplify further as 
\begin{eqnarray}
    \Tr[\mathcal{M}^{(2)2}] = \int_{\mathcal{T}} \int_{\mathcal{T}} d\tau(E) d\nu(F) \bigg[ \sum_{j,k} \Tr ~ E_{k} \rho E_{k}^\dag  F_{j} \rho F_{j}^\dag \bigg]^2. \nonumber \\ 
\end{eqnarray}
Now, combining everything we have 
\begin{eqnarray}
    && || \mathcal{A}_{\mathcal{T}}^{t} (\rho^{\otimes 2}) ||_2^2 = (\alpha^2 + \beta^2) ~ d^2 + 2 \alpha \beta d \nonumber  - 2 \bigg[\alpha \xi^{(\rho)} + \beta \int_{\mathcal{E}} d\mu(E) \sum_{k_1,k_2} \Tr ~ E_{k_1} \rho E_{k_1}^\dag E_{k_2} \rho E_{k_2}^\dag \bigg] \nonumber \\
    && \qquad \qquad \qquad \qquad \qquad  + \int_{\mathcal{T}} \int_{\mathcal{T}} d\tau(E) d\nu(F) \bigg[ \sum_{j,k} \Tr ~ E_{k} \rho E_{k}^\dag  F_{j} \rho F_{j}^\dag \bigg]^2.
\end{eqnarray}
The expressibility of the PQCh-ansatz is then given by $\Delta_{\mathcal{T}}^\rho \coloneqq || \mathcal{A}_{\mathcal{T}}^2 (\rho^{\otimes 2}) ||_2$.
\end{proof}

\subsection{Noisy Gradient Evolution} \label{app:noisy_gradient_evolution}
Consider a PQCh-ansatz that applies a parameterized CPTP quantum channel $\mathcal{E}_{\vec{\theta}}^{\mathbf{p}}$ on an initial starting state $\rho_0$ to yield the `noisy' quantum state $\rho_{\rm out}= \mathcal{E}_{\vec{\theta}}^{\mathbf{p}}(\rho_0)$. For a  given Hamiltonian $\mathcal{H}$, we define the cost function associated with this circuit as
\begin{eqnarray}
    \mathcal{C}_{\rho,\mathcal{H}} = \Tr[\mathcal{H} ~ \mathcal{E}_{\vec{\theta}}^{\mathbf{p}}(\rho_0)].
\end{eqnarray}
Throughout, we consider the general setting when the quantum channel is a composition of CPTP channels as
\begin{align}
 \mathcal{E}_{\vec{\theta}}^{\mathbf{p}} (\rho_0) &= \Kcal_L (\mathbf{p}_L,\vec{\theta}_L) \circ \Kcal_{L-1} (\mathbf{p}_{L-1},\vec{\theta}_{L-1}) \circ ... \circ \Kcal_1 (\mathbf{p}_1,\vec{\theta}_1) ~ \rho \\
    & = \mathcal{E}_{\Lcal} \circ \mathcal{K}_k(\pbf_{k},\thetavec_{k}) \circ \mathcal{E}_{\Rcal},
\end{align} 
where $\mathcal{E}_{\Lcal}= \mathcal{K}_{L}(\pbf_L,\thetavec_L) \circ \hdots \mathcal{K}_{k+1}(\pbf_{k+1},\thetavec_{k+1})$ and 
$\mathcal{E}_{\Rcal}= \mathcal{K}_{k-1}(\pbf_{k-1},\thetavec_{k-1}) \circ \hdots \mathcal{K}_{1}(\pbf_{1},\thetavec_{1})$ denote the left and right channel compositions, respectively.
Let us denote $\rho_\Rcal = \mathcal{E}_\Rcal \circ \rho$, $\Tilde{\rho}_k = \mathcal{U}_k \circ \rho_\Rcal = e^{-i \theta_k V_k}~ \rho_\Rcal~ e^{i \theta_k V_k}$, and finally $\rho_{\text{out}} = \mathcal{E}_\Lcal \circ \Tilde{\rho}_k$.  With these definitions, we have the cost function as 
\begin{eqnarray}
    \mathcal{C}_{\rho,\mathcal{H}} &&= \mathcal{H} ~ \mathcal{E}_{\vec{\theta}}^{\mathbf{p}}(\rho_0) = \Tr [\mathcal{H} \rho_{\text{out}}] \\
    && = \Tr [\mathcal{H}~ \mathcal{E}_\Lcal \circ \Tilde{\rho}_k] \\
    && \stackrel{(a)}{=} \Tr [\mathcal{H} \sum_m E_{Lm} \Tilde{\rho}_k E_{Lm}^\dag] \\
    && = \sum_m \Tr [\mathcal{H}~ E_{Lm} \Tilde{\rho}_k E_{Lm}^\dag] \\
    && = \sum_m \Tr [E_{Lm}^\dag \mathcal{H} E_{Lm}~ \Tilde{\rho}_k ] \\
    && = \Tr [ (\sum_m E_{Lm}^\dag \mathcal{H} E_{Lm})~ \Tilde{\rho}_k ] \\
    && = \Tr [ (\sum_m E_{Lm}^\dag \mathcal{H} E_{Lm})~  e^{-i \theta_k V_k} \rho_\Rcal e^{i \theta_k V_k} ] \\
    && \stackrel{(b)}{=} \Tr [ \mathcal{H}_\Lcal ~  e^{-i \theta_k V_k} \rho_\Rcal e^{i \theta_k V_k} ], \label{eq:cost_derivation_k}
\end{eqnarray} where in $(a)$ we used the Kraus representation of the channel $\mathcal{E}_\Lcal \circ \Tilde{\rho}_k =\sum_{m}E_{Lm} \Tilde{\rho}_k E_{Lm}^{\dag}$, and in $(b)$ we used $\mathcal{H}_\Lcal=\sum_m E_{Lm}^\dag \mathcal{H} E_{Lm}$ to denote the backward evolution of the Hamiltonian till the $k^{\text{th}}$ operation. Then, the $k^{\text{th}}$ unitary gate operation is applied and measurement is performed to compute the final trace. 

The derivation in \eqref{eq:cost_derivation_k} explicitly writes out the cost function $C_{\rho,\mathcal{H}}$ as a function of the $k$-th parameter vector $\theta_k$. We now use this to calculate the partial derivative of the cost function,
\begin{eqnarray}
    \partial_k \mathcal{C}_{\rho,\mathcal{H}} &&= \partial_{\theta_k} \Tr [ \mathcal{H}_\Lcal ~  e^{-i \theta_k V_k} \rho_\Rcal e^{i \theta_k V_k} ] \\
    && = \Tr [ \mathcal{H}_\Lcal ~  \partial_{\theta_k} e^{-i \theta_k V_k} \rho_\Rcal e^{i \theta_k V_k} ] \\
    &&= \Tr [ \mathcal{H}_\Lcal ~ \Bigl( e^{-i \theta_k V_k} \rho_\Rcal ~ \partial_{\theta_k} e^{i \theta_k V_k} + \partial_{\theta_k} (e^{-i \theta_k V_k}) \rho_\Rcal   e^{i \theta_k V_k}\Bigr) ] \\
    &&= \Tr [ \mathcal{H}_\Lcal ~ \Bigl( e^{-i \theta_k V_k} \rho_\Rcal ~ i V_k e^{i \theta_k V_k}  - i V_k e^{-i \theta_k V_k} \rho_\Rcal   e^{i \theta_k V_k} \Bigr)] \\
    &&= \Tr [ i \mathcal{H}_\Lcal ~ \Bigl(  e^{-i \theta_k V_k} \rho_\Rcal ~  e^{i \theta_k V_k} V_k - V_k e^{-i \theta_k V_k} \rho_\Rcal   e^{i \theta_k V_k} \Bigr)] \\
    &&= \Tr [ i \mathcal{H}_\Lcal  [ e^{-i \theta_k V_k} \rho_\Rcal ~  e^{i \theta_k V_k}, V_k] ] \\
    &&= \Tr [ i \rho_\Rcal  [ V_k, e^{i \theta_k V_k} \mathcal{H}_\Lcal ~  e^{-i \theta_k V_k}] ] \\
    &&= \Tr [ i \rho_\Rcal  [ V_k, \Tilde{\mathcal{H}}_\Lcal (\theta_k)] ], \label{eq:partialderivative}
\end{eqnarray} where in the second last equality we have used that
$V_k$ commutes with $e^{i \theta_k V_k}$ and the last equality uses that $\tilde{\mathcal{H}}_\Lcal(\theta_k)= e^{i \theta_k V_k} \mathcal{H}_\Lcal ~  e^{-i \theta_k V_k}$.

We now analyze the quantity $\Tilde{\mathcal{H}}_\Lcal$. For Hermitian generators $V_k$ satisfying $V_k^2 = \mathbbm{1}$ we have 
\begin{eqnarray}
    \Tilde{\mathcal{H}}_\Lcal(\theta_k) &&= e^{i \theta_k V_k} \mathcal{H}_\Lcal ~  e^{-i \theta_k V_k} \\
    && = (\cos \theta_k \mathbbm{1} + i \sin \theta_k V_k) \mathcal{H}_\Lcal ~  (\cos \theta_k \mathbbm{1} - i \sin \theta_k V_k) \nonumber \\
    && = \cos^2 \theta_k \mathcal{H}_\Lcal - i \sin \theta_k \cos \theta_k \mathcal{H}_\Lcal V_k + i \sin \theta_k \cos \theta_k V_k \mathcal{H}_\Lcal + \sin^2 \theta_k V_k \mathcal{H}_\Lcal V_k \\
    && = \cos^2 \theta_k \mathcal{H}_\Lcal + \sin^2 \theta_k V_k \mathcal{H}_\Lcal V_k  + \frac{i}{2} \sin 2\theta_k [V_k, \mathcal{H}_\Lcal].
\end{eqnarray}
This yields that
\begin{eqnarray}
    V_k \Tilde{\mathcal{H}}_\Lcal(\theta_k) && = \cos^2 \theta_k V_k \mathcal{H}_\Lcal + \sin^2 \theta_k V_k^2 \mathcal{H}_\Lcal V_k + \frac{i}{2} \sin 2\theta_k (V_k^2 \mathcal{H}_\Lcal -  V_k \mathcal{H}_\Lcal V_k) \\
    && = \cos^2 \theta_k V_k \mathcal{H}_\Lcal + \sin^2 \theta_k \mathcal{H}_\Lcal V_k  + \frac{i}{2} \sin 2\theta_k (\mathcal{H}_\Lcal -  V_k \mathcal{H}_\Lcal V_k).
\end{eqnarray}
Similarly, we have
\begin{eqnarray}
    \Tilde{\mathcal{H}}_\Lcal(\theta_k) V_k && = \cos^2 \theta_k \mathcal{H}_\Lcal V_k + \sin^2 \theta_k V_k \mathcal{H}_\Lcal + \frac{i}{2} \sin 2\theta_k ( V_k \mathcal{H}_\Lcal V_k - \mathcal{H}_\Lcal).
\end{eqnarray}
We then have the commutator as
\begin{eqnarray}
    [V_k,\Tilde{\Hcal}_\Lcal(\theta_k)] &&= (cos^2 \theta_k - \sin^2 \theta_k) [V_k, \mathcal{H}_\Lcal]  + i \sin 2\theta_k (\mathcal{H}_\Lcal - V_k \mathcal{H}_\Lcal V_k).
\end{eqnarray}
Plugging this into \eqref{eq:partialderivative} yields that
\begin{eqnarray}
    \partial_k \mathcal{C}_{\rho,\mathcal{H}} &&= i \cos 2\theta_k \Tr [\rho_\Rcal [V_k,\mathcal{H}_\Lcal]] - \sin 2\theta_k \Tr[\rho_\Rcal (\mathcal{H}_\Lcal - V_k \mathcal{H}_\Lcal V_k)].
\end{eqnarray}
Now, {\color{black} for fixed left and right noisy channels $\mathcal{E}_L$ and $\mathcal{E}_R$, we evaluate the expected partial derivative of the cost function with respect to the parameter randomness} as
\begin{align}
    \mathbbm{E}_{\theta_k} [\partial_k \mathcal{C}_{\rho,\mathcal{H}}] &= \frac{1}{2\pi} \int_0^{2\pi} d\theta_k (\partial_k \mathcal{C}_{\rho,\mathcal{H}}) \nonumber \\
    & = i \Tr[\rho_\Rcal [V_k, \mathcal{H}_\Lcal]] \frac{1}{2\pi} \int_0^{2\pi} \cos 2\theta_k d\theta_k  - \frac{1}{2\pi} \Tr[\rho_\Rcal (\mathcal{H}_\Lcal - V_k \mathcal{H}_\Lcal V_k)] \int_0^{2\pi} d\theta_k \sin 2\theta_k \nonumber \\
    & = 0,
\end{align}
and the corresponding variance as 
\begin{eqnarray}
    \text{var}(\partial_k \mathcal{C}_{\rho,\mathcal{H}}) = \mathbbm{E}_{\theta_k}[(\partial_k \mathcal{C}_{\rho,\mathcal{H}})^2] - \mathbbm{E}_{\theta_k}[\partial_k \mathcal{C}_{\rho,\mathcal{H}}]^2 = \mathbbm{E}_{\theta_k}[(\partial_k \mathcal{C}_{\rho,\mathcal{H}})^2],
\end{eqnarray}
where we have 
\begin{eqnarray}
    (\partial_k \mathcal{C}_{\rho,\mathcal{H}})^2= \Tr [i \rho_\Rcal [V_k,\Tilde{\Hcal}_\Lcal(\theta_k)]]^2 = - \Tr [\rho_\Rcal [V_k,\Tilde{\Hcal}_\Lcal(\theta_k)]]^2 = - \Tr [\rho_\Rcal^{\otimes 2} ~[V_k,\Tilde{\Hcal}_\Lcal(\theta_k)]^{\otimes 2}].
\end{eqnarray}
Thus, the variance with respect to the parameter $\theta_k$ {\color{black} (for some fixed $\mathcal{E}_L$ and $\mathcal{E}_R$)} is given by
\begin{eqnarray}
    \text{var}(\partial_k \mathcal{C}_{\rho,\mathcal{H}}) &&= - \mathbbm{E}_{\theta_k} [\Tr [\rho_\Rcal^{\otimes 2} ~[V_k,\Tilde{\Hcal}_\Lcal(\theta_k)]^{\otimes 2}]].\label{eq:final_variance}
\end{eqnarray}

We now evaluate the expected variance of the partial derivative with respect to the noisy channel realizations. To this end, we take the expectation over the distributions of the left and right CPTP maps $\mathcal{E}_\Lcal$ and $\mathcal{E}_\Rcal$ as
\begin{eqnarray}
    \mathbb{E}_{\mathcal{E}_\Lcal, \mathcal{E}_\Rcal}[\text{var}(\partial_k \mathcal{C}_{\rho,\mathcal{H}})] &&= -\mathbb{E}_{\mathcal{E}_\Lcal, \mathcal{E}_\Rcal}[\mathbb{E}_{\theta_k}[\text{Tr}[\rho_\Rcal^{\otimes 2}[V_k, \tilde{\mathcal{H}}_\Lcal(\theta_k)]^{\otimes 2}]]] \nonumber \\
    && = -\text{Tr}[\mathbb{E}_{\mathcal{E}_\Rcal}[\rho_\Rcal^{\otimes 2}] \mathbb{E}_{\mathcal{E}_\Lcal}[\mathbb{E}_{\theta_k}[[V_k, \tilde{\mathcal{H}}_\Lcal(\theta_k)]^{\otimes 2}]]] \label{eq:expectedvariance}.
\end{eqnarray}
Let us now define the parameter-integrated observable as $\Omega_k \equiv \mathbb{E}_{\theta_k}[[V_k, \tilde{\mathcal{H}}_L(\theta_k)]^{\otimes 2}]$ and calculate the first quantity as
\begin{eqnarray}
    \mathbb{E}_{\mathcal{E}_\Rcal}[\rho_\Rcal^{\otimes 2}] &&= \int_{\mathcal{E}_\Rcal} d\nu(\mathcal{E}_\Rcal) \sum_{j_1, j_2} (E^{(R)}_{ j_1} \otimes E^{(R)}_{j_2}) \rho^{\otimes 2} (E^{(R)\dagger}_{j_1} \otimes E^{(R)\dagger}_{j_2}) \nonumber \\
    && = \mathcal{M}^{(2)}_\Rcal (\rho^{\otimes 2}), 
\end{eqnarray} where recall from the definition of expressivity superoperator in \eqref{eq:superoperator_difference} that
\begin{eqnarray}
    &&  \mathcal{M}^{(2)}_\Rcal(\rho^{\otimes 2}) = \mathcal{A}^{(2)}(\rho^{\otimes 2}) - \mathcal{A}_{\mathcal{T_R}}^{(2)}(\rho^{\otimes 2}).
\end{eqnarray} 

Hence, the expected variance in \eqref{eq:expectedvariance} evaluates as
\begin{eqnarray}
\mathbb{E}_{\mathcal{E}_\Lcal, \mathcal{E}_\Rcal}[\text{var}(\partial_k \mathcal{C}_{\rho, \mathcal{H}})]
    &&= -\text{Tr}[\mathcal{M}^{(2)}_\Rcal(\rho^{\otimes 2}) \mathbb{E}_{\mathcal{E}_\Lcal}[\Omega_k]] \\
    &&= -\text{Tr}[(\mathcal{A}^{(2)}(\rho^{\otimes 2}) - \mathcal{A}_{\mathcal{T_R}}^{(2)}(\rho^{\otimes 2})) \mathbb{E}_{\mathcal{E}_\Lcal}[\Omega_k]] \\
    && = -\text{Tr}[\mathcal{A}^{(2)}(\rho^{\otimes 2}) \mathbb{E}_{\mathcal{E}_\Lcal}[\Omega_k]] + \text{Tr}[\mathcal{A}_{\mathcal{T_R}}^{(2)}(\rho^{\otimes 2}) \mathbb{E}_{\mathcal{E}_\Lcal}[\Omega_k]] \nonumber \\
    && = \text{var}_\Rcal(\partial_k \mathcal{C}_{\rho, \mathcal{H}}) + \text{Tr}[\mathcal{A}_{\mathcal{T_R}}^{(2)}(\rho^{\otimes 2}) \mathbb{E}_{\mathcal{E}_\Lcal}[\Omega_k]], 
\end{eqnarray} 
where we have explicitly defined $$\text{var}_\Rcal(\partial_k \mathcal{C}_{\rho, \mathcal{H}})=-\text{Tr}[\mathcal{A}^{(2)}(\rho^{\otimes 2}) \mathbb{E}_{\mathcal{E}_\Lcal}[\Omega_k]] $$ 
as the expected variance of the partial-derivative of the cost function over the left CPTP maps when the right sub-circuit forms a unitary 2-design. We then get that
\begin{align}
 |\mathbb{E}_{\mathcal{E}_\Lcal, \mathcal{E}_\Rcal}[\text{var}(\partial_k \mathcal{C}_{\rho, \mathcal{H}})] - \text{var}_\Rcal(\partial_k \mathcal{C}_{\rho, \mathcal{H}})| &\le |\text{Tr}[\mathcal{A}_{\mathcal{T_R}}^{(2)}(\rho^{\otimes 2}) \mathbb{E}_{\mathcal{E}_\Lcal}[\Omega_k]]| \\
    & \le ||\mathcal{A}_{\mathcal{T_R}}^{(2)}(\rho^{\otimes 2})||_2 \mathbb{E}_{\mathcal{E}_\Lcal}[||\Omega_k||_2], \label{eq:main_equation}
\end{align}where the last inequality follows from Jensen's inequality and an application of the Cauchy-Schwartz inequality. 
Recalling the definition of $\Omega_k=\mathbb{E}_{\theta_k}[[V_k, \tilde{\mathcal{H}}_\Lcal(\theta_k)]^{\otimes 2}]$, we use  the triangle inequality to get that
\begin{eqnarray}
    ||\Omega_k||_2 \le \mathbb{E}_{\theta_k}[||[V_k, \tilde{\mathcal{H}}_\Lcal(\theta_k)]^{\otimes 2}||_2],
\end{eqnarray}
where we have $\tilde{\mathcal{H}}_\Lcal(\theta_k) = e^{i\theta_k V_k}\mathcal{H}_\Lcal e^{-i\theta_k V_k}$. Since $V_k$ commutes with $e^{i\theta_k V_k}$, we can write
\begin{eqnarray}
    [V_k, \tilde{\mathcal{H}}_\Lcal(\theta_k)] = e^{i\theta_k V_k}[V_k, \mathcal{H}_\Lcal]e^{-i\theta_k V_k}.
\end{eqnarray} Additionally, the Frobenius norm is unitarily invariant, which then yields that $||[V_k, \tilde{\mathcal{H}}_\Lcal(\theta_k)]^{\otimes 2}||_2 = ||[V_k, \mathcal{H}_\Lcal]^{\otimes 2}||_2 = ||[V_k, \mathcal{H}_\Lcal]||_2^2$ is independent of $\theta_k$. Together, we get that
\begin{align}
   ||\Omega_k||_2 &\le \Vert[V_k, \mathcal{H}_\Lcal] \Vert_2^2  \\&= \text{Tr}[[V_k, \mathcal{H}_\Lcal] [V_k, \mathcal{H}_\Lcal]^{\dag}]|\\
    & = 2\text{Tr}[\mathcal{H}_\Lcal^2] - 2\text{Tr}[V_k \mathcal{H}_\Lcal V_k \mathcal{H}_\Lcal], \label{eq:relation_2}
\end{align} where the latter follows by noting that $V_k$ and $\mathcal{H}_\Lcal$ are Hermitian operators with $V_k^2=\mathbbm 1$.
\begin{note}
   Consider the  Hermitian operators $V_k$ and $\mathcal{H}_\Lcal$ satisfying $V_k^2=\mathbbm 1$.  Let $A$ be the anticommutator of these operators given by $A = V_k \mathcal{H}_\Lcal + \mathcal{H}_\Lcal V_k$. Then, 
    \begin{align}
        \Tr[A^2] &= \Tr[V_k \mathcal{H}_\Lcal V_k \mathcal{H}_\Lcal + V_k \mathcal{H}_\Lcal^2 V_k + \mathcal{H}_\Lcal V_k^2 \mathcal{H}_\Lcal + \mathcal{H}_\Lcal V_k \mathcal{H}_\Lcal V_k]\nonumber \\ &= 2 \Tr[V_k \mathcal{H}_\Lcal V_k \mathcal{H}_\Lcal] + 2 \Tr[\mathcal{H}_\Lcal^2] \geq 0,
    \end{align} which follows by noting that
   $V_k^2 = \mathbbm{1}$. 
    This then implies that \begin{align}\Tr[V_k \mathcal{H}_\Lcal V_k \mathcal{H}_\Lcal] \geq - \Tr[\mathcal{H}_\Lcal^2].\label{eq:relation}\end{align}
\end{note}

Combining \eqref{eq:relation_2} and \eqref{eq:relation}, and pluggin it in \eqref{eq:main_equation} then yields that
\begin{align}
|\mathbb{E}_{\mathcal{E}_\Lcal, \mathcal{E}_\Rcal}[\text{var}(\partial_k \mathcal{C}_{\rho, \mathcal{H}})] - \text{var}_\Rcal(\partial_k \mathcal{C}_{\rho, \mathcal{H}})| 
    & \le ||\mathcal{A}_{\mathcal{T_R}}^{(2)}(\rho^{\otimes 2})||_2 \mathbb{E}_{\mathcal{E}_\Lcal}[||\Omega_k||_2] \\
    & \le 4||\mathcal{A}_{\mathcal{T_R}}^{(2)}(\rho^{\otimes 2})||_2 \mathbb{E}_{\mathcal{E}_\Lcal}[||\mathcal{H}_\Lcal||_2^2] \\&=4||\mathcal{A}_{\mathcal{T_R}}^{(2)}(\rho^{\otimes 2})||_2 \int_{\mathcal{E}_\Lcal} d\nu(\mathcal{E}_\Lcal) ||\mathcal{E}_\Lcal^\dagger(\mathcal{H})||_2^2,
\end{align}where the last equality follows by noting that since observables evolve under the adjoint map, we can write $\mathcal{H}_\Lcal = \mathcal{E}_\Lcal^\dagger(\mathcal{H})$.

\subsection{Proof of Lemma~\ref{lem:operator_Haar}}\label{app:lemma1}
We start by noting that any unitary operator $W$ can be written as $W = \exp(i \epsilon H)$ in terms of some Hermitian operator $H$. This then yields  that
\begin{eqnarray}
    W^{\otimes2} && = e^{i\epsilon H} \otimes e^{i\epsilon H} =  (e^{i\epsilon H} \otimes \mathbbm{1}) (\mathbbm{1} \otimes e^{i\epsilon H}) \nonumber \\
    && = e^{i\epsilon (H\otimes \mathbbm{1})} e^{i\epsilon (\mathbbm{1} \otimes H)} \nonumber \\
&& = e^{i\epsilon (H\otimes \mathbbm{1})} e^{i\epsilon (\mathbbm{1} \otimes H)} e^{\frac{1}{2} i \epsilon [H \otimes \mathbbm{1}, \mathbbm{1} \otimes H]} \times ~\text{(higher order terms)} \nonumber \\
    && = e^{i \epsilon \textcolor{black}{[H \otimes \mathbbm{1} + \mathbbm{1} \otimes H]}} = e^{i \epsilon \textcolor{black}{\eta}}.
\end{eqnarray}
Now, consider an operator
 $X$ satisfying the commutator relation $[X,W^{\otimes 2}] = 0$. This implies that
\begin{eqnarray}
    X~W^{\otimes 2} - W^{\otimes 2} X = 0~\forall \epsilon.
\end{eqnarray}
Now since it holds for all $\epsilon$, we have $ \partial_{\epsilon} ( XW^{\otimes 2} - W^{\otimes 2} X ) = 0$. This then implies
\begin{eqnarray}
     && X \partial_\epsilon W^{\otimes 2} - (\partial_\epsilon W^{\otimes 2}) X = 0 \\
      \implies && X i \eta W^{\otimes 2} - i \eta W^{\otimes 2} X = 0 \\
      \implies && X i \eta W^{\otimes 2} - i \eta X W^{\otimes 2} = 0 ~~~ \text{[ from (25) ]} \\
      \implies && (X \eta - \eta X) W^{\otimes 2} = 0 \\
      \implies && [X,\eta] = 0.
\end{eqnarray}
Plugging back the definition of $\eta$, we then get that $[X,H \otimes \mathbbm{1} + \mathbbm{1} \otimes H] = 0.$

\begin{remark}
    By property of kronecker product we have,
    \begin{eqnarray}
        && ((A+B) \otimes \mathbbm{1}) + (\mathbbm{1} \otimes (A+B)) \nonumber \\ 
        && = A \otimes \mathbbm{1} + B \otimes \mathbbm{1} + \mathbbm{1} \otimes A + \mathbbm{1} \otimes B \\
        && = (A \otimes \mathbbm{1} + \mathbbm{1} \otimes A) + (B \otimes \mathbbm{1} + \mathbbm{1} \otimes B).
    \end{eqnarray}
\end{remark}

We note that any hermitian matrix can be written as
\begin{eqnarray}
    H = \sum_a h_{aa} \mathcal{D}_a + \sum_{ab} f_{ab} \mathcal{R}_{ab} + \sum_{ab} g_{ab} \mathcal{I}_{ab}
\end{eqnarray}
where $\mathcal{D}_a$ is the diagonal component given by $\op{a}$, $\mathcal{R}_{ab}$ is the real component given by $\op{a}{b} + \op{b}{a}$, and $\mathcal{I}_{ab}$ is the imaginary component given by $i \op{a}{b} - i \op{b}{a}$. 
Let's first compute the diagonal term of $H \otimes \mathbbm{1} + \mathbbm{1} \otimes H$. We have the diagonal operator as
\begin{eqnarray}
    D_a = \op{a} \otimes \mathbbm{1} + \mathbbm{1} \otimes \op{a}
\end{eqnarray}
Now, the diagonal operator acts on some quantum state  as
\begin{eqnarray}
    D_a \ket{qs} && = (\op{a} \otimes \mathbbm{1} + \mathbbm{1} \otimes \op{a}) \ket{qs} \\
    && = \op{a} \ket{q} \otimes \ket{s} + \ket{q} \otimes \op{a}\ket{s} \\
    && = \delta_{aq} \ket{as} + \delta{as} \ket{qa} \\
    && = \delta_{aq} \ket{qs} + \delta{as} \ket{qs} \\
    && = (\delta_{aq} + \delta_{as}) \ket{qs}
\end{eqnarray}
Now, consider the expectation value of the commutator for any arbitrary state. We then have 
\begin{eqnarray}
    && \bra{pr} [X,D_a] \ket{qs} = 0 \\
    \implies && \bra{pr} XD_a - D_a X \ket{qs} = 0 \\
    \implies && \bra{pr} XD_a \ket{qs} - \bra{pr} D_a X \ket{qs} = 0 \\ 
    \implies && \bra{pr} X (\delta_{aq} + \delta_{as}) \ket{qs} - \bra{pr} (\delta_{ap} + \delta_{ar}) X \ket{qs} = 0 \nonumber \\ \\
    \implies && (\delta_{aq} + \delta_{as}) X_{prqs} - (\delta_{ap} + \delta_{ar}) X_{prqs} = 0 \\
    \implies && (\delta_{aq} + \delta_{as} -\delta_{ap} - \delta_{ar}) X_{prqs} = 0 \label{eq:cond_1}
\end{eqnarray}
Assuming that $X_{prqs} \neq 0$ in general, the above relation implies that
\begin{eqnarray}
    && \delta_{aq} + \delta_{as} -\delta_{ap} - \delta_{ar} = 0,
\end{eqnarray} which leads to the following two choices that satisfy the above relation:
\begin{eqnarray}
    && q=p ~ \text{and} ~ s=r \\
    \text{or} ~~ && q=r ~ \text{and} ~ s=p. \label{eq:conditions}
\end{eqnarray}The former choice yields $X_{prqs}=X_{pr}\delta_{pq} \delta_{rs}$ as a feasible operator definition satisfying \eqref{eq:cond_1}, while the latter choice yields  $X_{prqs}=X_{pr}\delta_{ps} \delta_{qr}$.

Next look at the imaginary component. We have the operator 
\begin{eqnarray}
    \mathcal{I}_{ab} = i \op{a}{b} - i \op{b}{a} = i (\op{a}{b} - \op{b}{a}).
\end{eqnarray}
The effect of this operator on the state is as
\begin{eqnarray}
    \mathcal{I}_{ab} \ket{k} = i \op{a}{b} \ket{k} - i \op{b}{a} \ket{k} = i \delta_{bk} \ket{a} - i \delta_{ak} \ket{b} 
\end{eqnarray}
and similarly
\begin{eqnarray}
    \bra{j} \mathcal{I}_{ab}  = i \delta_{ja} \bra{b} - i \delta_{jb} \bra{a} 
\end{eqnarray}
Now, let $\gamma_{ab} = \mathcal{I}_{ab} \otimes \mathbbm{1} + \mathbbm{1} \otimes \mathcal{I}_{ab}$. The effect of this operator on some arbitrary state is given by
\begin{eqnarray}
    && \gamma_{ab} \ket{qs} \nonumber \\
    && = (\mathcal{I}_{ab} \otimes \mathbbm{1} + \mathbbm{1} \otimes \mathcal{I}_{ab}) \ket{qs} \\
    && = i ( \delta_{bq}  \ket{a} -  \delta_{aq} \ket{b}) \ket{s} + i \ket{q} ( \delta_{bs}  \ket{a} -  \delta_{as} \ket{b})  \\
    && = i ( \delta_{bq}  \ket{as} -  \delta_{aq} \ket{bs} + \delta_{bs}  \ket{qa} -  \delta_{as} \ket{qb})
\end{eqnarray}
Similarly, we have
\begin{eqnarray}
    \bra{pr} \gamma_{ab} = i ( \delta_{pa}  \bra{br} -  \delta_{pb} \bra{ar} + \delta_{ra}  \bra{pb} -  \delta_{rb} \bra{pa})
\end{eqnarray}
Next, considering the expectation value of the commutator for arbitrary state we have 
\begin{eqnarray}
    && \bra{pr} (X\gamma_{ab} - \gamma_{ab} X) \ket{qs} = 0 \\
    \implies && \bra{pr} (X\gamma_{ab} \ket{qs} - \bra{pr} \gamma_{ab} X) \ket{qs} = 0 \\
    \implies && i \bra{pr} X ( \delta_{bq}  \ket{as} -  \delta_{aq} \ket{bs} + \delta_{bs}  \ket{qa} -  \delta_{as} \ket{qb}) \nonumber \\
    &&- i ( \delta_{pa}  \bra{br} -  \delta_{pb} \bra{ar} + \delta_{ra}  \bra{pb} -  \delta_{rb} \bra{pa}) X \ket{qs} = 0 \nonumber \\ \\
    \implies && ( \delta_{bq}  X_{pras} -  \delta_{aq} X_{prbs} + \delta_{bs}  X_{prqa} -  \delta_{as} X_{prqb}) \nonumber \\
    && - ( \delta_{pa}  X_{brqs} -  \delta_{pb} X_{arqs} + \delta_{ra}  X_{pbqs} -  \delta_{rb} X_{paqs}) = 0 \label{eq:61}
\end{eqnarray}
Now, from the analysis of the diagonal components in \eqref{eq:conditions}, we have that either $\{p,r\} = \{q,s\}$, whereby the operator takes the form $X_{prqs}= X_{pr} \delta_{pq} \delta_{rs}$, or  $\{p,r\} = \{s,q\}$, whereby the operator takes the form $ X_{prqs} = X_{pr} \delta_{ps} \delta_{qr}.$
Consequently, in the first case, i.e., $\{p,r\} = \{q,s\}$, we have \eqref{eq:61} taking the form
\begin{eqnarray}
    && ( \delta_{bq} \delta_{pa} \delta_{rs}  X_{pr} -  \delta_{aq} \delta_{pb} \delta_{rs} X_{pr} \nonumber \\
    && \qquad \qquad + \delta_{bs} \delta_{pq} \delta_{ra}  X_{pr} -  \delta_{as} \delta_{pq} \delta_{rb} X_{pr}) \nonumber \\
    && - ( \delta_{pa} \delta_{bq} \delta_{rs}  X_{br} -  \delta_{pb} \delta_{aq} \delta_{rs} X_{ar} \nonumber \\
    && \qquad \qquad + \delta_{ra} \delta_{pq} \delta_{bs}  X_{pb} -  \delta_{rb} \delta_{pq} \delta_{as} X_{pa}) = 0. 
\end{eqnarray}
Rearranging the above relation then yields
\begin{eqnarray}
    && \delta_{bq} \delta_{pa} \delta_{rs}  (X_{pr} - X_{br}) - \delta_{aq} \delta_{pb} \delta_{rs} (X_{pr} - X_{ar}) \nonumber \\
    && + \delta_{bs} \delta_{pq} \delta_{ra}  (X_{pr} - X_{pb}) - \delta_{as} \delta_{pq} \delta_{rb} (X_{pr} - X_{pa}) = 0 \nonumber \\
\end{eqnarray}
Assuming that the delta functions are non-zero implies that 
\begin{eqnarray}
    && X_{pr} = X_{br}, ~~ X_{pr} = X_{ar} \\
    && X_{pr} = X_{pb}, ~~ X_{pr} = X_{pa}. 
\end{eqnarray}
This implies $X_{ar} = X_{br} = \alpha$ and $X_{pb} = X_{pa} = \alpha$. 
We then have the expectation value as $X_1 = \bra{pr} X \ket{qs} = \alpha \delta_{pq} \delta_{rs} = \alpha ~ \mathbbm{1}$.

Next, let's look at the second case, i.e., $\{p,r\} = \{s,q\}$. From \eqref{eq:61}, we have
\begin{eqnarray}
    && ( \delta_{bq} \delta_{ps} \delta_{ra}  X_{pr} -  \delta_{aq} \delta_{ps} \delta_{rb} X_{pr} \nonumber \\
    && \qquad \qquad + \delta_{bs} \delta_{pa} \delta_{rq}  X_{pr} -  \delta_{as} \delta_{pb} \delta_{rq} X_{pr}) \nonumber \\
    && - ( \delta_{pa} \delta_{bs} \delta_{rq}  X_{br} -  \delta_{pb} \delta_{as} \delta_{rq} X_{ar} \nonumber \\
    && \qquad \qquad + \delta_{ra} \delta_{ps} \delta_{bq}  X_{pb} -  \delta_{rb} \delta_{ps} \delta_{aq} X_{pa}) = 0 
\end{eqnarray}

On simplifying we have
\begin{eqnarray}
    && \delta_{bq} \delta_{ps} \delta_{ra}  (X_{pr} - X_{pb}) - \delta_{aq} \delta_{ps} \delta_{rb} (X_{pr} - X_{pa}) \nonumber \\
    && + \delta_{bs} \delta_{pa} \delta_{rq}  (X_{pr} - X_{br}) - \delta_{as} \delta_{pb} \delta_{rq} (X_{pr} - X_{ar}) = 0 \nonumber \\
\end{eqnarray}
Assuming that the delta functions are non-zero implies that 
\begin{eqnarray}
    && X_{pr} = X_{pb}, ~~ X_{pr} = X_{pa} \\
    && X_{pr} = X_{br}, ~~ X_{pr} = X_{ar}. 
\end{eqnarray}
This implies $X_{pb} = X_{pa} = \beta$ and $X_{br} = X_{ar} = \beta$. 
We then have the expectation value as $X_2 = \bra{pr} X \ket{qs} = \beta \delta_{ps} \delta_{rq} = \beta ~ \text{SWAP}$. {\color{black} We thus get that there exists an operator $X=\alpha \mathbbm{1}+ \beta {\rm SWAP}$ such that the commutor relation $[X,W^{\otimes 2}]=0$ satisfies.}

Recall that we had $[ \mathcal{G}^t_\mathcal{U} (\rho), W^{\otimes t} ] = 0$. 
So we can write the operator $\mathcal{G}^t_\mathcal{U} (\rho)$ as
\begin{eqnarray}
    \mathcal{G}^t_\mathcal{U} (\rho) = \alpha ~ \mathbbm{1} + \beta ~ \text{SWAP}, \label{eq:e^2_rho}
\end{eqnarray} for some $\alpha$ and $\beta$. To find these values, we first note that  
$
    \Tr \mathcal{G}^t_\mathcal{U} (\rho) = \Tr ~ \rho 
$, which implies that 
\begin{eqnarray}
    && \Tr (\alpha~\mathbbm{1} + \beta ~ \text{SWAP}) = \alpha d^2 + \beta d = \Tr \rho. \label{eq:sol_1}
\end{eqnarray}

Furthermore, we have that
\begin{eqnarray}
    \Tr ( \mathcal{G}^t_\mathcal{U} (\rho) ~ \text{SWAP}) && = \int_{\mathcal{U}(d)} d\mu (V) \Tr [ V^{\otimes t} \rho (V^\dag)^{\otimes t} ~ \text{SWAP} ] \nonumber \\
    && = \int_{\mathcal{U}(d)} d\mu(V) \Tr [ V^{\otimes t} \rho ~ \text{SWAP} ~ (V^\dag)^{\otimes t}] \nonumber \\ && =\Tr [\rho \text{SWAP}], 
\end{eqnarray}
\textcolor{black}{where we assumed that $(V^\dag)^{\otimes t}$ commutes with the SWAP operator and applied the cyclic property of trace}. Combining this with \eqref{eq:e^2_rho}, we get that  
\begin{align}
 \Tr ( \mathcal{G}^t_\mathcal{U} (\rho) ~ \text{SWAP}) & =\Tr [\alpha \mathbbm{1} ~ \text{SWAP} + \beta ~ \text{SWAP}^2] \\&  =\alpha d + \beta d^2 = \Tr [\rho ~ \text{SWAP}]. \label{eq:sol_2}
\end{align}
Solving \eqref{eq:sol_1} and \eqref{eq:sol_2}, we have
\begin{eqnarray}
    && \alpha = \frac{d \Tr \rho - \Tr [\rho ~ \text{SWAP}]}{d(d^2 - 1)}, \\
    && \beta = \frac{d \Tr[\rho~ \text{SWAP}] - \Tr \rho}{d(d^2 - 1)},
\end{eqnarray}
which in turn yields that
\begin{eqnarray}
    && \mathcal{G}^t_\mathcal{U} (\rho) = [\frac{d \Tr \rho - \Tr [\rho ~ \text{SWAP}]}{d(d^2 - 1)}] ~ \mathbbm{1} + [\frac{d \Tr[\rho~ \text{SWAP}] - \Tr \rho}{d(d^2 - 1)}] ~ \text{SWAP}.
\end{eqnarray}
\end{document}